\newtheorem{theorem}{Theorem}[section]
\theoremstyle{definition}
\DeclareMathOperator*{\argmax}{\arg\!\max}
\title[Active Subspaces for a Long term Model of HIV]{Mathematical Analysis and Dynamic Active Subspaces for a Long term model of HIV}
\author[Tyson Loudon and Stephen Pankavich]{}
\subjclass{Primary: 92B99; Secondary: 53C35.}
 \keywords{HIV modeling, Stability analysis, Active subspaces, Dimension reduction}
 \email{tloudon@mines.edu}
 \email{pankavic@mines.edu}
\thanks{The second author is supported by NSF grant DMS 12-11667}
\thanks{$^*$ Corresponding author: pankavic@mines.edu}
\begin{document}
\maketitle

\centerline{\scshape Tyson Loudon and Stephen Pankavich$^*$}
\medskip
{\footnotesize
 \centerline{Department of Applied Mathematics and Statistics}
  \centerline{Colorado School of Mines}
  \centerline{1500 Illinois St.}
  \centerline{Golden, CO 80401, USA}
} 

\bigskip

 \centerline{(Communicated by the associate editor name)}

\begin{abstract}
Recently, a long-term model of HIV infection dynamics \cite{longTerm} was developed to describe the entire time course of the disease.  It consists of a large system of ODEs with many parameters, and is expensive to simulate.  In the current paper, this model is analyzed by determining all infection-free steady states and studying the local stability properties of the unique biologically-relevant equilibrium. Active subspace methods are then used to perform a global sensitivity analysis and study the dependence of an infected individual's T-cell count on the parameter space. Building on these results, a global-in-time approximation of the T-cell count is created by constructing dynamic active subspaces and reduced order models are generated, thereby allowing for inexpensive computation.
\end{abstract}


\section{Introduction}

The Human Immunodeficiency Virus (HIV) disables many components of the body's immune system and, without antiretroviral treatment, leads to the onset of Acquired Immune Deficiency Syndrome (AIDS).
Despite the vast amount of study devoted to understanding viral pathogenesis and developing new therapeutics, no procedure or medication currently exists to reliably eliminate the virus from a host.  However, new advances in long-term treatment strategies and insight into disease dynamics have stemmed from mathematical and computational modeling approaches, in addition to clinical experimentation. 

A variety of mathematical models have been proposed to describe HIV infection and disease dynamics \cite{CP, K96, KP, KW, KWC, MKK, NowakMay, PS, P1, PN, TanWu}. Unfortunately, using a model to capture the entire time course of infection within the body can be troublesome as many oversimplify the biological dynamics of the disease in an effort to gain mathematical and rudimentary biological insight, and fail to capture all stages of infection. 
The majority of models accurately capture only the first stage(s) of infection \cite{CP, K96, KP, MKK, NowakMay, Pank, PP} with the T-cell count and viral load asymptotically approaching a nonzero limit - the latter referred to as the viral set point.
One recent description has been able to provide a holistic understanding of disease dynamics by accurately capturing all three stages of infection. This model, proposed in \cite{longTerm}, is comprised of a system of seven nonlinear autonomous differential equations that are fully coupled and augmented by twenty-seven distinct parameters.  
In this paper we investigate the dynamical properties of the model and establish a result concerning its large-time asymptotic behavior. Further, we analyze this model, utilizing mathematical and statistical methods to elucidate the contribution of the parameter space on an infected individual's T-cell count, and approximate solutions as a function of time.

The paper proceeds as follows.  Within the next section, all infection-free steady states of the model are determined and the local asymptotic stability properties of the biologically-relevant equilibrium are studied.  The associated theorems have interesting implications for the model's predictive nature, especially upon the introduction of antiretroviral therapy.  In Section $3$, we utilize active subspace methods to perform a global sensitivity analysis of the model with respect to its parameter space and further investigate the dependence of an infected individual's T-cell count on system parameters.  With this information, a global-in-time approximation of the T-cell count is constructed using dynamic active subspaces. Additionally, various reduced models are constructed to represent different stages of the disease, and these are discussed in detail.  In general, active subspace methods are a useful tool to perform global sensitivity analysis of a given parameter space, provide a clear picture of the most important activity in a model arising from parameter variation, construct dimensionally-reduced approximations to complex, dynamical models, and execute inexpensive numerical approximations from models that require computationally-intensive simulation. This is of particular importance in the current context as parameters for the original long-term HIV model are individual-dependent, and therefore must be determined for each new patient.  Hence, the construction of a significantly less expensive computational approximation will allow one to utilize the model for large sets of patient data.
Finally, we note that all of the MATLAB scripts and functions used to generate our results are provided free, open source, and available to the public at \url{http://inside.mines.edu/~pankavic/activeHIV}.

\section{Model Description, Parameters, and Analysis}
\label{ch:model}

To begin, we consider the following long term model of HIV disease dynamics within a host, as recently formulated in \cite{longTerm}: 
\begin{equation} 
\label{Hadji}
\left.
\begin{aligned}
\frac{dT}{dt} &= s_1 + \frac{p_1}{C_1+V} TV - \delta_1 T - (K_1 V + K_2 M_I)T \\
\frac{dT_I}{dt} &= \psi(K_1 V + K_2 M_I)T + \alpha_1 T_L - \delta_2 T_I - K_3 T_I CTL \\
\frac{dT_L}{dt} &= (1 - \psi)(K_1 V + K_2 M_I)T - \alpha_1 T_L - \delta_3 T_L \\
\frac{dM}{dt} &= s_2 + K_4 MV - K_5 MV - \delta_4 M \\
\frac{dM_I}{dt} &=  K_5 MV - \delta_5 M_I - K_6 M_I CTL \\
\frac{dCTL}{dt}  &=  s_3 + (K_7 T_I + K_8 M_I)CTL - \delta_6 CTL \\
\frac{dV}{dt} &= K_9 T_I + K_{10} M_I - K_{11} TV - (K_{12} + K_{13})MV - \delta_7 V 
\end{aligned}
\right \}
\end{equation}
The population of CD4$^+$ T-cells, denoted here by $T(t)$, is one of the most critical components in determining the body's response to HIV infection, and the first equation represents its time evolution. 
In (\ref{Hadji}) the T-cell population is increased by a standard source, $s_1$, which represents the constant supply rate of immunocompetent T-cells from the thymus, and a nonlinear generation term $\frac{p_1}{C_1+V}TV$, which accounts for the appearance of new T-cells due to the immune system's response to the infection. 
In contrast, because $T$-cells have a finite lifespan, a natural death term $\delta_1 T$ is also included.
The last two terms in the first equation model the infection of T-cells by either virions, the population of which is denoted by $V(t)$, or infected macrophages, $M_I(t)$, at rates $K_1$ and $K_2$, respectively.  The latter term is introduced in \eqref{Hadji} as studies have shown that infected macrophages likely play a vital role in the progression of infection \cite{Gumel, SIV} by producing large amounts of virus even after the T-cell population has been depleted.  

The second equation describes changes within the actively-infected T-cell population ($T_I$), the first term of which represents the addition to this compartment due to new infections.  However, not all interactions between virions and healthy T-cells produce actively-infected T-cells. Some proportion $\psi$ of these new infections contribute to the $T_I$ population, while the remainder ($1-\psi$) lack necessary host factors, resulting in the absence of viral protein expression, and become latently-infected T-cells ($T_L$), contributing to the third equation.
The infected T-cell population adds, in addition to the number created by the virions and macrophages, a supply of newly-activated latent T-cells, at a per-capita rate $\alpha_1$. In turn, this portion of the $T_L$ population is lost within the third equation.  Infected T-cells are lost due to natural death, which is represented by $\delta_2 T_I$, while a more interesting term, $K_3 T_I CTL$, represents the loss of infected T-cells due to cytotoxic lymphocytes ($CTL$), one of the many attacker cells the immune system employs.  Latently-infected T-cells also possess a natural death rate denoted by $\delta_3$. 

\begin{figure}[t]
	\vspace{-0.1in}
	\centering
	\includegraphics[scale=.55]{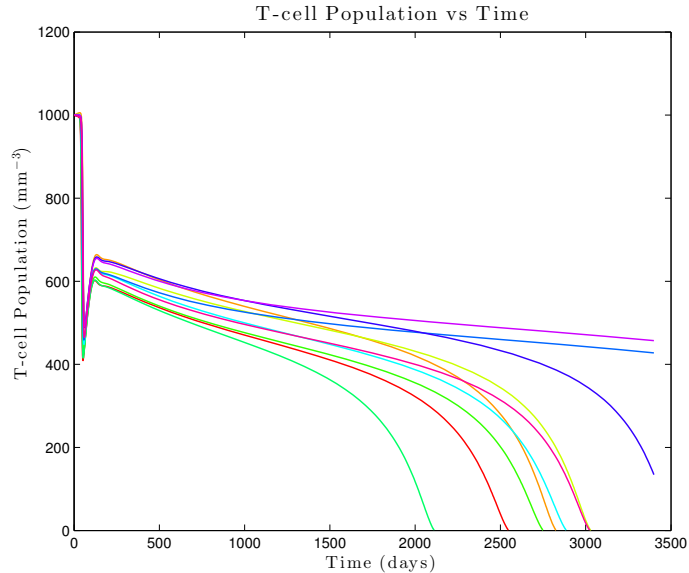}
	\vspace{-0.1in}
	\caption{Ten simulations of \eqref{Hadji} with representative parameter values.} 
	\label{sims}
	\vspace{-0.1in}
\end{figure}

Similar to T-cells, the macrophage population ($M$) possess a natural birth rate, $s_2$, as well as a natural death rate, $\delta_4$. 
As discussed in \cite{NowakMay}, macrophages divide and become more aggressive in seeking out pathogens once they are alerted of a viral presence by CD4+ T-cells. 
Hence, the macrophage population is increased in response to HIV infection at a rate of $K_4$. 
Macrophages also attempt to eliminate virions, but may also become infected, adding to the infected macrophage population ($M_I$). 
Infected macrophages naturally expire at a certain rate $\delta_5$, and are also destroyed by cytotoxic lymphocytes at a rate $K_6$. Once infected, macrophages produce virions at a rate given by $K_{10}$.
Infected macrophages may also infect healthy T-cells, and the rate at which this occurs is denoted by $K_2$.

The main defender of the body against infected cells is the cytotoxic lymphocyte, the population of which is denoted by $CTL$. These cells seek to destroy renegade T-cells and macrophages that have been infected and altered by HIV.  As for the other components of the immune system, we assume that cytotoxic lymphocytes are produced at a constant rate $s_3$ by the bone marrow. Additionally, new lymphocytes are produced proportionate to the other aspects of the body's immune response.  Hence, the term $(K_7T_I+K_8M_I)CTL$, occurs within the sixth equation to represent the recruitment of new lymphocytes. 

Lastly, the growth of the virion population $V$ depends on a variety of parameters. New virions are continually produced by infected T-cells and infected macrophages at rates $K_9$ and $K_{10}$, respectively. In addition, the virion population is decreased through a variety of means.  Due to infection, virions are lost at the rates $K_{11}$ and $K_{12}$, proportional to the interaction of T-cells and macrophages with virions, respectively. Also, macrophages, responding to antibodies produced by a host, ingest and destroy virions at a rate $K_{13}$. Finally, virus particles are cleared from the host by other factors, including the innate immune response, at a rate $\delta_7$. 

Note that all parameter values in \eqref{Hadji} are positive. Typical values and ranges for the parameters taken from \cite{longTerm} can be found within Table~\ref{tab:paramvalues}.
The time course of infection predicted by this model with the established parameter values was shown to agree well  with clinical data from \cite{Fauci, Greenough, Pennisi}, and representative simulations of \eqref{Hadji} with realistic parameter values are displayed in Figure \ref{sims}.

\begin{table}[t]
\centering
\begin{tabular}{| c | c | c | c | c | c |}
\hline
Parameter & Value & Range & Value taken from: & Units \\
\hline \hline
$s_1$ & 10 & 5 - 36 & \cite{KP} & mm$^{-3}$d$^{-1}$ \\
\hline
$s_2$ & 0.15 & 0.03 - 0.15 & \cite{KP} & mm$^{-3}$d$^{-1}$ \\
\hline
$s_3$ & 5 & - & \cite{longTerm} & mm$^{-3}$d$^{-1}$ \\
\hline
$p_1$ & 0.2 & 0.01 - 0.5 & \cite{longTerm} & d$^{-1}$ \\
\hline
$C_1$ & 55.6 & 1 - 188 & \cite{longTerm} & mm$^{-3}$ \\
\hline
$K_1$ & 3.87 x $10^{-3}$ & 10$^{-8}$ - 10$^{-2}$ & \cite{longTerm} & mm$^{3}$d$^{-1}$ \\
\hline
$K_2$ & $10^{-6}$ & $10^{-6}$ & \cite{KP} & mm$^{3}$d$^{-1}$ \\
\hline
$K_3$ & 4.5 x 10$^{-4}$ & 10$^{-4}$ - 1 & \cite{longTerm} & mm$^{3}$d$^{-1}$ \\
\hline
$K_4$ & 7.45 x 10$^{-4}$ & - & \cite{longTerm} & mm$^{3}$d$^{-1}$ \\
\hline
$K_5$ & 5.22 x 10$^{-4}$ & 4.7 x 10$^{-9}$ - 10$^{-3}$ & \cite{longTerm} & mm$^{3}$d$^{-1}$ \\
\hline 
$K_6$ & 3 x 10$^{-6}$ & - & \cite{longTerm} & mm$^{3}$d$^{-1}$ \\
\hline
$K_7$ & 3.3 x 10$^{-4}$ & 10$^{-6}$ - 10$^{-3}$ & \cite{longTerm} & mm$^{3}$d$^{-1}$ \\
\hline
$K_8$ & 6 x 10$^{-9}$ & - & \cite{longTerm} & mm$^{3}$d$^{-1}$ \\
\hline
$K_9$ & 0.537 & 0.24 - 500 & \cite{longTerm} & d$^{-1}$ \\
\hline
$K_{10}$ & 0.285 & 0.005 - 300 & \cite{longTerm} & d$^{-1}$ \\
\hline
$K_{11}$ & 7.79 x 10$^{-6}$ & - & \cite{longTerm} & mm$^{3}$d$^{-1}$ \\
\hline
$K_{12}$ & 10$^{-6}$ & - & \cite{longTerm} & mm$^{3}$d$^{-1}$ \\
\hline
$K_{13}$ & 4 x 10$^{-5}$ & - & \cite{longTerm} & mm$^{3}$d$^{-1}$ \\
\hline
$\delta_1$ & 0.01 & 0.01 - 0.02 & \cite{longTerm} & d$^{-1}$ \\
\hline
$\delta_2$ & 0.28 & 0.24 - 0.7 & \cite{longTerm} & d$^{-1}$ \\
\hline 
$\delta_3$ & 0.05 & 0.02 - 0.069 & \cite{longTerm} & d$^{-1}$ \\
\hline
$\delta_4$ & 0.005 & 0.005 & \cite{KP} & d$^{-1}$ \\
\hline
$\delta_5$ & 0.005 & 0.005 & \cite{KP} & d$^{-1}$ \\
\hline
$\delta_6$ & 0.015 & 0.015 - 0.05 & \cite{params2} & d$^{-1}$ \\
\hline
$\delta_7$ & 2.39 & 2.39 - 13 & \cite{KP} & d$^{-1}$ \\
\hline
$\alpha_1$ & 3 x 10$^{-4}$ & - & \cite{longTerm} & d$^{-1}$ \\ 
\hline
$\psi$ & 0.97 & 0.93 - 0.98 & \cite{longTerm} & - \\
\hline 
\end{tabular} 
\caption{Parameter values and ranges}
\label{tab:paramvalues}
\vspace{-0.2in}
\end{table}

\subsection{Infection-free steady states}

Though the system (\ref{Hadji}) possesses a large number of steady states - the authors have discovered at least ten using standard parameter values and a computational root finder - one is often most interested in understanding the dynamical properties of the disease-free equilibrium. In this section, we identify such equilibria and investigate the stability of the biologically-relevant state.  
Our first result demonstrates that only one such equilibrium state exists when all populations of \eqref{Hadji} are positive.
\begin{theorem}
\label{T1}
The model \eqref{Hadji} possesses exactly two virus-free (i.e. $V \equiv 0$) steady states.  One of these states, namely
$$E := \left ( \frac{s_1}{\delta_1 - \omega K_2 K_9} , \omega K_{10} , \frac{\omega K_{10} \xi}{K_6 (\alpha_1 + \delta_3 \psi)} , \frac{s_2}{\delta_4} , -\omega K_9 , -\frac{\delta_5}{K_6} , 0 \right ) $$
achieves negative values, where $$ \omega = \frac{s_3 K_6 + \delta_5 \delta_6}{\delta_5 (K_7 K_{10} - K_8 K_9)} \quad and \quad \xi = (1 - \psi)(\delta_2 K_6 - \delta_5 K_3).$$
The only nonnegative (i.e. biologically relevant) steady state of (\ref{Hadji}) satisfying $V \equiv 0$ is 
$$ E_{NI} := \left (\frac{s_1}{\delta_1}, 0, 0, \frac{s_2}{\delta_4}, 0, \frac{s_3}{\delta_6}, 0 \right ).$$
\end{theorem}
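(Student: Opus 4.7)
The plan is to solve the algebraic system obtained by setting $V \equiv 0$ and imposing that each right-hand side of \eqref{Hadji} vanishes. A natural case split will collapse the system into two essentially independent computations, each yielding one candidate equilibrium. First I would observe that the fourth equation decouples immediately to give $M = s_2/\delta_4$, while the fifth equation factors as $-M_I(\delta_5 + K_6\, CTL) = 0$, producing the two branches $M_I = 0$ or $CTL = -\delta_5/K_6$.

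In the first branch the remaining equations cascade cleanly: the seventh equation gives $T_I = 0$, the third forces $T_L = 0$, the first produces $T = s_1/\delta_1$, and the sixth gives $CTL = s_3/\delta_6$; the second equation is then automatically satisfied. All coordinates are strictly positive, producing $E_{NI}$.

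In the second branch the $CTL$-coordinate is already negative, so any equilibrium emerging from this case cannot be biologically relevant, and this alone suffices for the uniqueness of $E_{NI}$ among nonnegative virus-free steady states. To recover the explicit formula for $E$, I would use the seventh equation to write $T_I = -(K_{10}/K_9)M_I$, substitute this together with $CTL = -\delta_5/K_6$ into the sixth equation, and solve the resulting linear equation for $M_I = -K_9\omega$, hence $T_I = K_{10}\omega$. The first equation then yields $T = s_1/(\delta_1 - \omega K_2 K_9)$, and eliminating $T$ between the second and third equations produces the formula for $T_L$.

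The main obstacle is the algebraic simplification of $T_L$ in the second branch. Combining equations two and three to cancel the common factor $K_2 M_I T$, one must group the coefficients of $T_L$ carefully to recognize the identity $\psi(\alpha_1+\delta_3) + \alpha_1(1-\psi) = \alpha_1 + \delta_3\psi$, which is exactly what produces the denominator in the stated formula, while the factor $T_I(\delta_2 + K_3\, CTL) = K_{10}\omega(\delta_2 K_6 - K_3 \delta_5)/K_6$ gives the numerator $\xi$. Once these simplifications are in place the conclusion is immediate: exactly two $V$-free equilibria appear, and $E_{NI}$ is the unique nonnegative one because the sixth coordinate of $E$ is strictly negative.
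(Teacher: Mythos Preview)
Your proof is correct and follows essentially the same approach as the paper: setting $V=0$, reading off $M=s_2/\delta_4$, factoring the $M_I$-equation to obtain the dichotomy $M_I=0$ or $CTL=-\delta_5/K_6$, and then resolving each branch. The only cosmetic difference is order: the paper combines the $T_I$ and $T_L$ equations first and then solves a $3\times 3$ linear system (using the $CTL$ and $V$ equations) for $T_I,T_L,M_I$, whereas you solve sequentially---first eq.~7 for $T_I$ in terms of $M_I$, then eq.~6 for $M_I$, then the combined eqs.~2--3 for $T_L$---which is arguably cleaner but mathematically identical.
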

Hence, the only guarantee of viral clearance as $t \to \infty$ occurs when actively and latently infected populations are also eradicated, resulting in healthy T-cell and macrophage populations tending asymptotically to background values.
The proof of Theorem~\ref{T1} is contained in Appendix A. Utilizing standard parameter values for this model from Table~\ref{tab:paramvalues}, we find the following equilibrium populations for $E$: 
$$
\begin{gathered}
T = 1010.39 \, {\rm mm}^{-3}, \qquad
T_I = 54.57 \, {\rm mm}^{-3}  \\
T_L = -15.77 \, {\rm mm}^{-3}, \qquad
M = 30 \, {\rm mm}^{-3} \\
M_I = -102.83 \, {\rm mm}^{-3}, \quad
CTL = -1666.67 \, {\rm mm}^{-3}, \quad V = 0 \, \rm{mm}^{-3} \\
\end{gathered}
$$
Since the parameter values in \eqref{Hadji} are positive, the steady state $E$ given in Theorem~\ref{T1} must have a negative cytotoxic T-lymphocyte population, namely $-\frac{\delta_5}{K_6}$. So, under no parameter regime will $E$ be biologically relevant.  With the unique infection-free steady state identified, we turn to its stability properties.

\subsection{Stability Analysis}

Next, we provide necessary and sufficient conditions which guarantee the local asymptotic stability of the disease-free equilibrium $E_{NI}$.\\
\begin{theorem}
\label{T3}
The equilibrium state $E_{NI} $ is locally asymptotically stable if and only if $R_0 \leq 1$, where
$$R_0 = \max \{ R_1, R_2, R_3 \}$$
\rm{and}
$$R_1 = \frac{K_1K_9}{\delta_2K_{11}}, \qquad R_2 = \frac{K_5K_{10}}{(K_{12} + K_{13} ) \delta_5}$$
$$R_3 = \frac{K_2 K_5 K_9 s_1 s_2}{\delta_1 \delta_2 \delta_4 \delta_5 \delta_7 + \delta_4 \delta_5 K_1 K_9 s_1 + \delta_1 \delta_2 K_5 K_{10} s_2}.$$
\end{theorem}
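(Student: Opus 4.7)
The plan is to linearize the system at $E_{NI}$ and characterize the parameter regime in which the resulting Jacobian is Hurwitz. First I would compute the Jacobian $J$ of the right-hand side of \eqref{Hadji} at $E_{NI}$. Because $T_I$, $T_L$, $M_I$, and $V$ all vanish at $E_{NI}$, a direct inspection shows that every partial derivative of $T_I'$, $T_L'$, $M_I'$, $V'$ with respect to $T$, $M$, or $CTL$ vanishes there, and likewise each partial of $T'$, $M'$, $CTL'$ with respect to $T_I$, $T_L$, $M_I$ is zero. Reordering the coordinates as $(T,M,CTL,T_I,T_L,M_I,V)$, $J$ becomes block upper triangular with an upper $3 \times 3$ diagonal block $\operatorname{diag}(-\delta_1,-\delta_4,-\delta_6)$ --- contributing three automatically negative eigenvalues --- and a lower $4 \times 4$ block $A$ whose spectrum governs the remaining eigenvalues.

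Next I would examine the $4 \times 4$ block. Its diagonal entries are strictly negative effective clearance rates, while its off-diagonal entries are non-negative bilinear combinations of $K_1,K_2,K_5,K_9,K_{10}$ with the equilibrium uninfected populations $T^\ast = s_1/\delta_1$ and $M^\ast = s_2/\delta_4$. Hence $A$ is a Metzler matrix, so its spectral abscissa equals the Perron eigenvalue, and $A$ is Hurwitz if and only if $-A$ is a non-singular $M$-matrix --- equivalently, the Routh--Hurwitz conditions hold for
\begin{equation*}
p_A(\lambda) = \lambda^4 + a_3\lambda^3 + a_2\lambda^2 + a_1\lambda + a_0.
\end{equation*}
I would compute $p_A$ by cofactor expansion along the first column (in which only two entries are nonzero) and simplify the coefficients $a_0,\ldots,a_3$ explicitly in terms of the original parameters.

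The core of the argument is to show that these polynomial inequalities collapse to the single condition $R_0 = \max\{R_1,R_2,R_3\} \leq 1$. I expect the constant term $a_0 = \det(-A)$ to be the binding condition: after cancellation it should factor as a positive multiple of $1 - R_3$, producing the coupled reproduction number whose denominator mixes the three decay pathways $\delta_1\delta_2\delta_4\delta_5\delta_7$, $\delta_4\delta_5 K_1 K_9 s_1$, and $\delta_1\delta_2 K_5 K_{10} s_2$. The two single-pathway inequalities $R_1 \leq 1$ and $R_2 \leq 1$ should emerge either from the lower-order Routh--Hurwitz conditions or, more transparently, from the $2 \times 2$ principal minors of $-A$ corresponding to the closed infection loops $V \to T_I \to V$ and $V \to M_I \to V$, respectively.

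The hard part will be this final identification step, since it requires matching raw Routh--Hurwitz expressions --- rational functions in roughly two dozen parameters --- to the compact closed form $\max\{R_1,R_2,R_3\}$. My plan for this is to organize each coefficient $a_i$ as a sum indexed by feedback loops in the virion/T-cell/macrophage network, observing that any monomial not arising from a closed infection loop is manifestly positive, while the three closed loops contribute terms that are collectively non-negative precisely when $R_1, R_2, R_3 \leq 1$. Conversely, if some $R_i > 1$, the corresponding principal minor of $-A$ becomes negative, so $A$ fails to be Hurwitz and $E_{NI}$ is unstable, closing the equivalence.
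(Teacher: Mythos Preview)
Your approach is essentially the paper's: linearize at $E_{NI}$, read off the three trivial eigenvalues $-\delta_1,-\delta_4,-\delta_6$, and reduce the remaining quartic via Routh--Hurwitz to the three inequalities $R_1,R_2,R_3\le 1$; the paper likewise finds that the sign conditions on $a_2,a_1$ yield $R_1,R_2\le 1$ and the condition on $a_0$ adds $R_3\le 1$. One small correction: your claim that each partial of $T',M',CTL'$ with respect to $T_I,T_L,M_I$ vanishes at $E_{NI}$ is false (for instance $\partial T'/\partial M_I=-K_2 s_1/\delta_1$ and $\partial CTL'/\partial T_I=K_7 s_3/\delta_6$ are nonzero), but you do not actually need it --- block upper-triangularity in your ordering requires only that the lower-left $4\times 3$ block vanish, which it does, and the upper $3\times 3$ block is diagonal simply because $T',M',CTL'$ are mutually decoupled at $E_{NI}$. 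Your Metzler/$M$-matrix observation is a clean structural addition the paper does not exploit and could streamline the final identification step.
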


The proof of Theorem~\ref{T3} is also contained within Appendix A. Computing the basic reproduction number of Theorem \ref{T3} by using the standard parameter values given in Table~\ref{tab:paramvalues}, we find that $R_0 = R_1 = 953 >> 1$. Hence, as expected, the non-infective steady state $E_{NI}$ is not locally asymptotically stable.  

Introducing antiretroviral therapy, or ART, into the system provides additional insight into this result.
Two specific classes of ART drugs, namely Reverse Transcriptase Inhibitors (RTIs) and Protease Inhibitors (PIs), serve to reduce the amount of new virus produced by either reducing the ability of virions to replicate through reverse transcription or disabling the capability of newly-produced virions to mature, thereby rendering them uninfective.  The efficacies of these classes of drug, denoted $\epsilon_{RTI}, \epsilon_{PI} \in [0,1]$, can be incorporated using the transformations 
$$K_1 \to K_1 (1 - \epsilon_{RTI}), \ \ K_5 \to K_5 (1 - \epsilon_{RTI}), \ \ K_9 \to K_9 (1 - \epsilon_{PI}), \ \ K_{10} \to K_{10} (1 - \epsilon_{PI}).$$ 
As these constants appear within the stability result and decrease each of the ratios $R_1$, $R_2$, and $R_3$, it follows that a large enough efficacy will force the system to tend towards $E_{NI}$ as $t \to \infty$.  Therefore, one merely needs to perform this transformation on the result of Theorem \ref{T3} in order to determine what efficacy is needed to guarantee viral clearance.  Using the previously-determined parameter values and defining $\epsilon \in [0,1]$ by
$$1-\epsilon :=  (1-\epsilon_{RTI}) (1 - \epsilon_{PI}),$$ 
we find $\epsilon > 0.998$ in order to force $R_0 < 1$.  Thus, a combined drug efficacy greater than $99.8\%$ would be needed to asymptotically drive the system to clearance.  Of course, lesser drug efficacies could still give rise to viral loads that are effectively negligible rather than tending to zero, and therefore correspond to viral clearance.

Notice that the asymptotic stability result in Theorem~\ref{T3} depends only upon a relatively few number ($15$ of $27$) of the parameters.  Therefore, the majority of the pertinent dynamics takes place on a lower-dimensional subspace of the entire parameter space.  Hence, in the next section we explore a dynamic tool to better understand the contribution of the parameter space to populations within the model. In particular, this will lead to reducing the dimension of the parameter space with minimal loss of information using an active subspace decomposition.

\section{Dynamic Active Subspaces, Sensitivity, and Reduced Models}
\label{ch:dynamicaspace}
In this section we will use active subspace methods to approximate the T-cell count at a specific time given the parameter values in \eqref{Hadji}. We closely follow the material as developed by Constantine and co-authors \cite{Aspaces, compAspaces}.  An active subspace is a low-dimensional linear subspace of the set of parameters, in which input perturbations along these directions alter the model's predictions more, on average, than perturbations which are orthogonal to the subspace. These subspaces allow for a global measurement of sensitivity of output variables with respect to parameters, and often the construction of reduced-order models that greatly decrease the dimension of the parameter space.

\subsection{Active Subspace Methods}
\label{subsec:aspace}
The general structure of an active subspace decomposition begins by letting $m \in \mathbb{N}$ be given and defining the space $X = [-1,1]^m$.  Also given is a differentiable function $f: X \to \mathbb{R}$ and an associated probability density $\rho: X \to \mathbb{R}^+$ satisfying $$\int \rho(x) \ dx = 1.$$
Here, the space $X$ represents a normalized set of parameter values.  With these quantities in place, consider the matrix $C$ defined by
\begin{equation}
\label{C}
C = \int (\nabla_x f) (\nabla_x f)^T \! \rho \, d\bf{x}.
\end{equation}
For any smooth $f$, the matrix $C$ represents an average derivative functional which weights input values according to the density $\rho$.  In general terms, $f(x)$ represents the quantity of interest in a given model, while gradients of $f$ are taken with respect to normalized model parameters $x \in X$, and $\rho(x)$ is the probability density associated to the values of these parameters. 
We note here that a single normalized parameter is a random variable taking values in $[-1,1]$, which when appropriately scaled represents a parameter in the original model \eqref{Hadji}. 
Since the dimension of the parameter space in this model is $27$, we take $m = 27$ throughout.  The matrix $C$ is the average of the outer product of the gradient of $f$ with itself and has some useful properties that will allow us to deduce information about how $f$ is altered by perturbations in its arguments. 

Considering each entry of the matrix  
\begin{equation*}
C_{ij} = \int \frac{\partial f}{\partial x_i} \frac{\partial f}{\partial x_j} \rho(\mathbf{x}) \, d \mathbf{x}
\end{equation*}
we note that $C$ is symmetric, and thus permits the spectral eigendecompostion 
\begin{equation}
\label{edecomp}
C = W \Lambda W^T, \quad \mathrm{where} \quad \Lambda = \mathrm{diag}(\lambda_1,\ldots,\lambda_m), \quad \lambda_1 \geq \ldots \geq \lambda_m \geq 0.
\end{equation}
Here, $W$ is an orthogonal matrix whose columns $\mathbf{w}_i, \, (i = 1, \ldots, m)$ are the orthonormal eigenvectors of $C$. 
From \eqref{edecomp} we can further solve for the eigenvalues of $C$, which are given by
\begin{equation}
\label{evalues}
\lambda_i = \int \big((\nabla_x f)^T \textbf{w}_ i\big)^2 \rho(\mathbf{x})  \, d\textbf{x}, \quad i = 1, \ldots, m.
\end{equation}
From \eqref{evalues} we see that the eigenvalues of the $C$ matrix are the mean squared directional derivatives of $f$, in the direction of the corresponding eigenvector. Thus, the eigenvalues of $C$ provide useful information about the quantity of interest. For instance, if a particular eigenvalue is small then \eqref{evalues} tells us that, on average, $f$ does not change significantly in the direction of the corresponding eigenvector. Conversely, if the eigenvalue under consideration is large, then we may deduce that $f$ changes considerably in the direction of the corresponding eigenvector. Therefore, it will be of interest to further investigate the behavior of the function in this direction. 

Once the eigendecomposition \eqref{edecomp} has been determined, the eigenvalues and eigenvectors can be  separated in the following way:
\begin{equation}
\label{mdecomp}
\Lambda = \begin{bmatrix} 
			\Lambda_1 & 0 \\
			0 & \Lambda_2 \\
		  \end{bmatrix} , \quad
W = \begin{bmatrix}
         W_1 & W_2 \\
	\end{bmatrix}.
\end{equation}
where $\Lambda_1$ contains the ``large'' eigenvalues of $C$, $\Lambda_2$ contains the ``small'' eigenvalues, and $W_k$ contains the eigenvectors associated with each $\Lambda_k$, for $k=1, 2$. 
An easy way to differentiate between the ``large'' and ``small'' eigenvalues is to list them on a log plot from greatest to least and determine a spectral gap. This gap will correspond to differences of at least an order of magnitude, and thus allow one to compartmentalize large eigenvalues within $\Lambda_1$ and the remaining smaller eigenvalues in $\Lambda_2$. A more systematic method of choosing how many eigenvalues to store within $\Lambda_1$ will be presented in Section~\ref{subsec:T1700}.

With the decomposition \eqref{mdecomp}, we can represent any element $\mathbf{x}$ of the parameter space by
\begin{equation}
\mathbf{x} = \underbrace{W W^T}_\text{I} \mathbf{x} = W_1 \underbrace{W_1^T \bf{x}}_\text{\bf{y}} + W_2 \underbrace{W_2^T \bf{x}}_\text{\bf{z}} = W_1 \mathbf{y} + W_2 \mathbf{z}.
\end{equation}
Thus, evaluating the quantity of interest at $\mathbf{x}$ is equivalent to doing so at the point $W_1 \mathbf{y} + W_2 \mathbf{z}$, i.e.
\begin{equation*}
f(\mathbf{x}) = f(W_1 \mathbf{y} + W_2 \mathbf{z}).
\end{equation*}
By the definition of $W_1$ and $W_2$ it's clear that small perturbations in $\mathbf{z}$ will not, on average, alter the values of $f$. However, small perturbations in $\mathbf{y}$ will, on average, change $f$ significantly. For this reason we define the range of $W_1$ to be the \textit{active subspace} of the model and the range of $W_2$ to be the corresponding \textit{inactive subspace}.  The linear combinations that generate these subspaces will then represent the contributions of differing parameters in the model and describe the sensitivity of the quantity of interest with respect to parameter variations.

\begin{figure}[t]
	\vspace{-0.3in}
\centering
	\includegraphics[scale = .4]{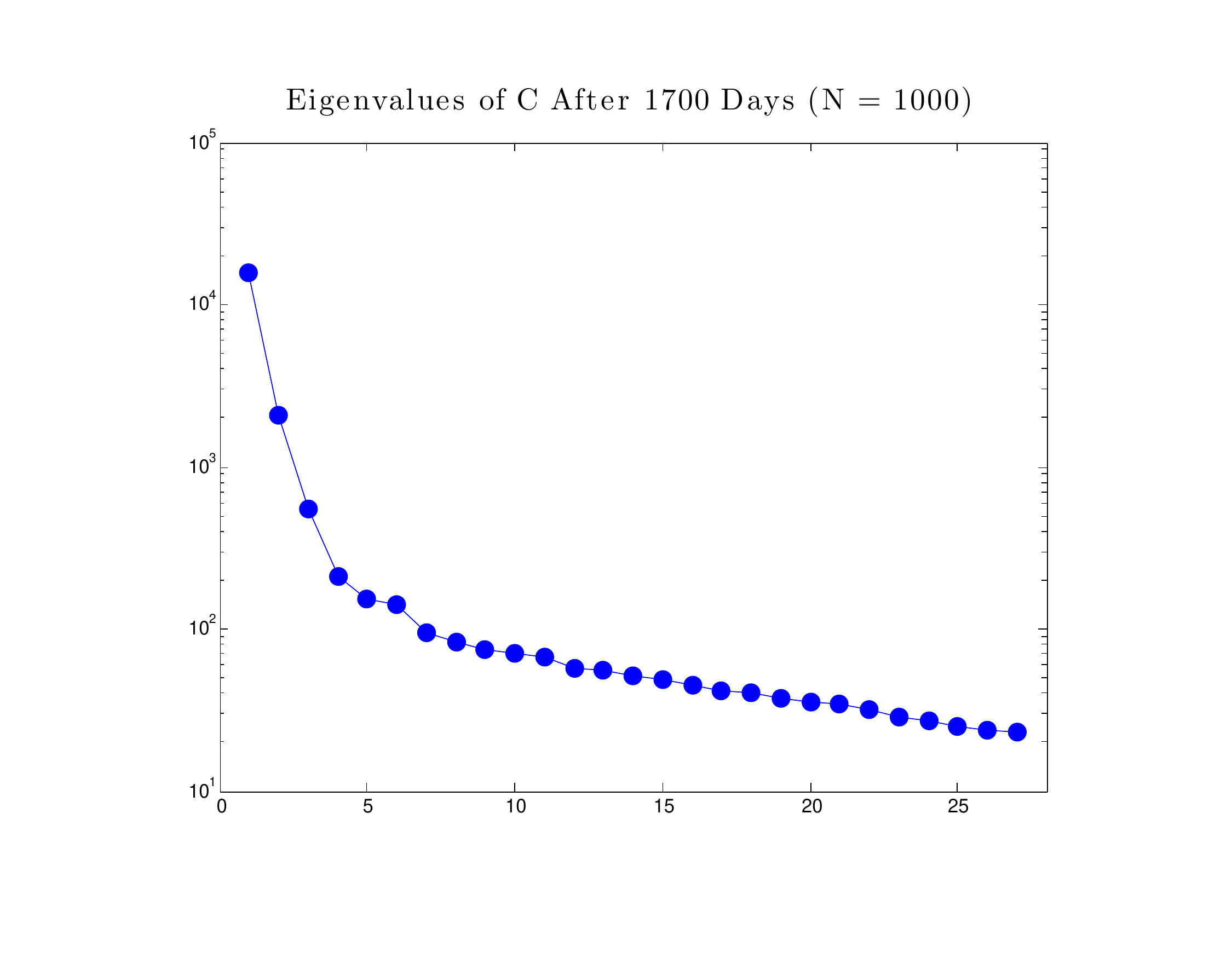}
	\vspace{-0.5in}
	\caption{Approximation of eigenvalues of $C$ using 1000 random samples.} 
	\label{fig:evalues1700}
\end{figure}

In general, the eigenvalues and eigenvectors of $C$ defined by \eqref{C} can be well-approximated, using a random sampling algorithm. We will briefly outline the method, but full details can be found in \cite{Aspaces} (Algorithm 3.1) and \cite{compAspaces}.  The algorithm can be described concisely as follows:
%
\begin{enumerate} 
\item Draw $N$ samples $\{ \mathbf{x}_j \}_{j=1}^N$ independently according to the density $\rho$. 
\item For each parameter sample $\mathbf{x}_j$, approximate the gradient $\nabla_x f_j = \nabla_x f(\mathbf{x}_j)$ using the finite difference, i.e. $$\partial_{x_i} f(\mathbf{x}_j) \approx \frac{f(\mathbf{x}_j + \mathbf{h}_i) - f(\mathbf{x}_j)}{\vert \mathbf{h}_i \vert}$$ where 
$$(\mathbf{h}_i)_k = 
\begin{cases}
\delta & \mbox{if } i = k \\ 
0 & \mbox{if } i \neq k .
\end{cases}$$ 
represents a vector perturbation from the sampled parameter values and $\delta>0$ can be taken arbitrarily small.
\item Approximate the matrix $C$ by 
\vspace{-20pt} 
\begin{center}
\begin{equation*}
C \approx \hat{C} = \frac{1}{N} \sum_{j = 1}^N  (\nabla_x f_j) (\nabla_x f_j)^T
\end{equation*}
\end{center}
\item Compute the eigendecompositions $\hat{C} = \hat{W}\hat{\Lambda}\hat{W}^T$.
\end{enumerate}

We note that the last step is equivalent to computing the singular value decomposition of the matrix
\begin{equation}
\frac{1}{\sqrt{N}}[ \nabla_x f_1 \ldots \nabla_x f_N] = \hat{W} \sqrt{\hat{\Lambda}} \hat{V},
\end{equation}
where it can be shown that the singular values are the square roots of the eigenvalues of $\hat{C}$ and the left singular vectors are the eigenvectors of $\hat{C}$. The singular value decomposition method of approximating $\hat{C}$ was developed first in \cite{Russi}. 

\subsection{An Illustrative Example - Approximating $\mathbf{T(1700)}$}
\label{subsec:T1700}
Next, we will demonstrate the active subspace method at one specific point in time by applying the aforementioned algorithm to the HIV model \eqref{Hadji}.  A suitable quantity of interest would likely involve the T-cell or virus population evaluated at a fixed time.  For this example we select $T(1700)$, the T-cell count $1700$ days after initial infection. This quantity was chosen because the T-cell count is a strong indicator of a patient's overall health and the most important factor in the progression of HIV infection. The time of $1700$ days after initial infection was chosen because regardless of parameter values, preliminary simulations have shown that the patient's T-cell count within \eqref{Hadji} will not have decreased to zero by this time. However, if a later time is chosen, the patient's T-cell count could vanish before the final time is reached. 

\begin{figure}[t]
	\vspace{-0.3in}
	\includegraphics[scale=.4]{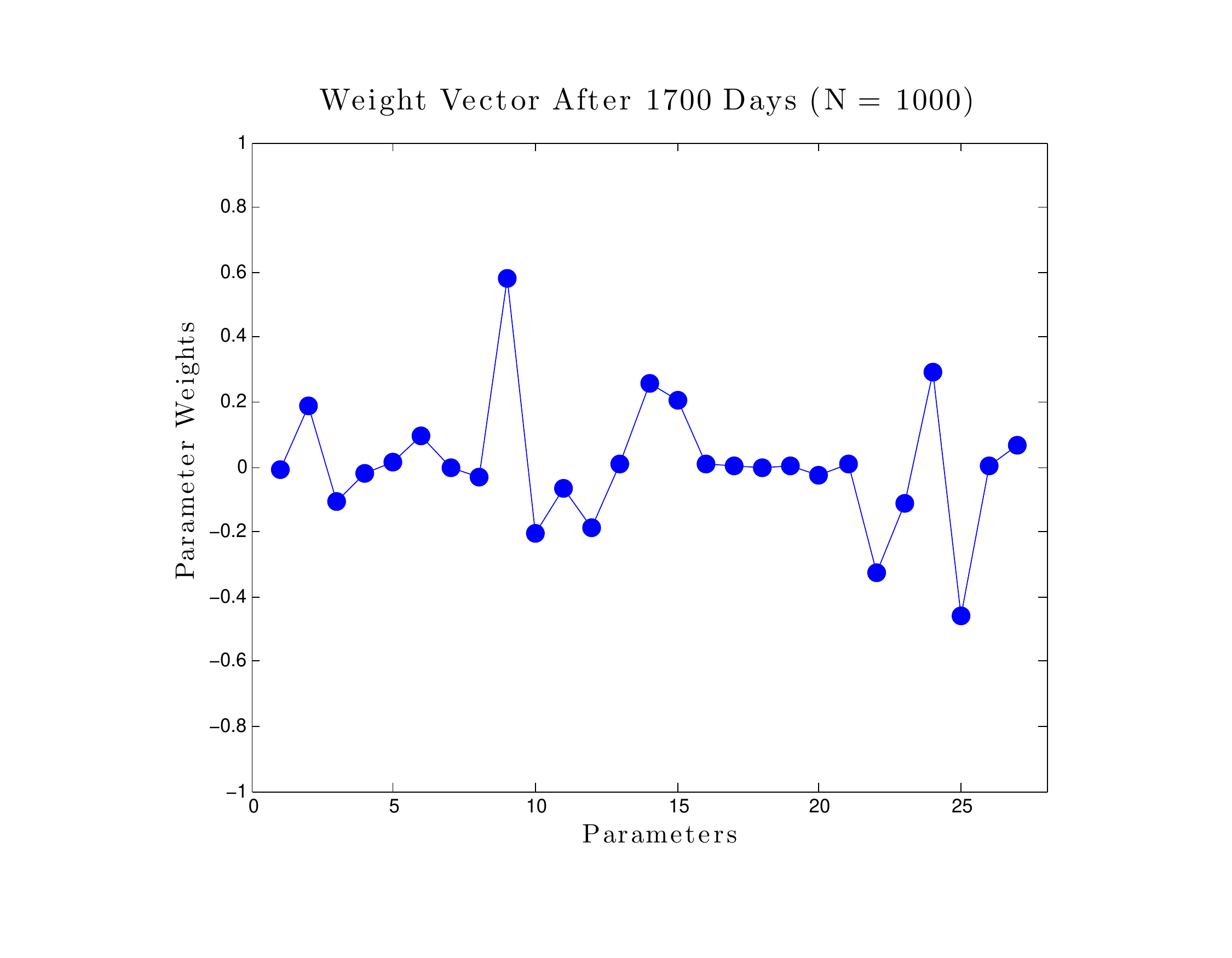}
	\vspace{-0.3in}
	\caption{Approximation of the 1st eigenvector of $C$ using $1000$ random samples. This is referred to as the first \textit{active variable vector} and denoted by $\mathbf{w}$.} 
	\label{fig:wv1700}
\end{figure}
In order to compute $C$, one must be able to construct the gradient of the quantity of interest with respect to the normalized parameter space as required in Step 3 of the algorithm. 
Since we do not have an explicit representation for $T(1700)$ as a function of system parameters, we cannot explicitly compute gradients. Instead we approximate them using the aforementioned finite difference scheme with a step size of $\delta = 10^{-6}$.   Additionally, each sample is chosen so that every normalized parameter is uniformly distributed between $-1$ and $1$, i.e. $\mathbf{x_j} \sim (U[-1,1])^{27}$. 
In order to map this normalized parameter space onto the biologically-relevant range of parameter values, we use the linear mapping
\begin{equation}
\label{p}
\mathbf{p} = \frac{1}{2}\big (\rm{diag}(\mathbf{x_u} - \mathbf{x_l})\mathbf{x_j} + (\mathbf{x_u} + \mathbf{x_l}) \big ),
\end{equation}
for each of the random samples $\mathbf{x_j} \sim (U[-1,1])^{27}$,
where $\mathbf{x_u}$ and $\mathbf{x_l}$ are vectors containing the upper and lower bounds on the parameters, respectively. Thus, the resulting vector $\mathbf{p}$ represents the actual parameter values input within the model. Next, a stiff differential equations solver (MATLAB's \texttt{ode23s} function) is used to compute the T-cell count after $1700$ days. Then, each of the $27$ parameters is perturbed by $\delta = 10^{-6}$, and again the corresponding T-cell count after $1700$ days is computed. With these two values, the aforementioned finite difference approximation is used to calculate the gradient of $f(\textbf{x}) = T(1700; \textbf{x})$ with respect to the normalized model parameters.
Regarding the upper and lower limits, $\mathbf{x_u}$ and $\mathbf{x_l}$ are taken to be $2.5\%$ above and below the standard values given in Table~\ref{tab:paramvalues}. 
%
Figure \ref{fig:evalues1700} displays the approximate eigenvalues of the corresponding $C$ matrix. Clearly a spectral gap exists between the first and second eigenvalues. 
\begin{figure}[t]
	\vspace{-0.3in}
\centering
	\includegraphics[scale=.40]{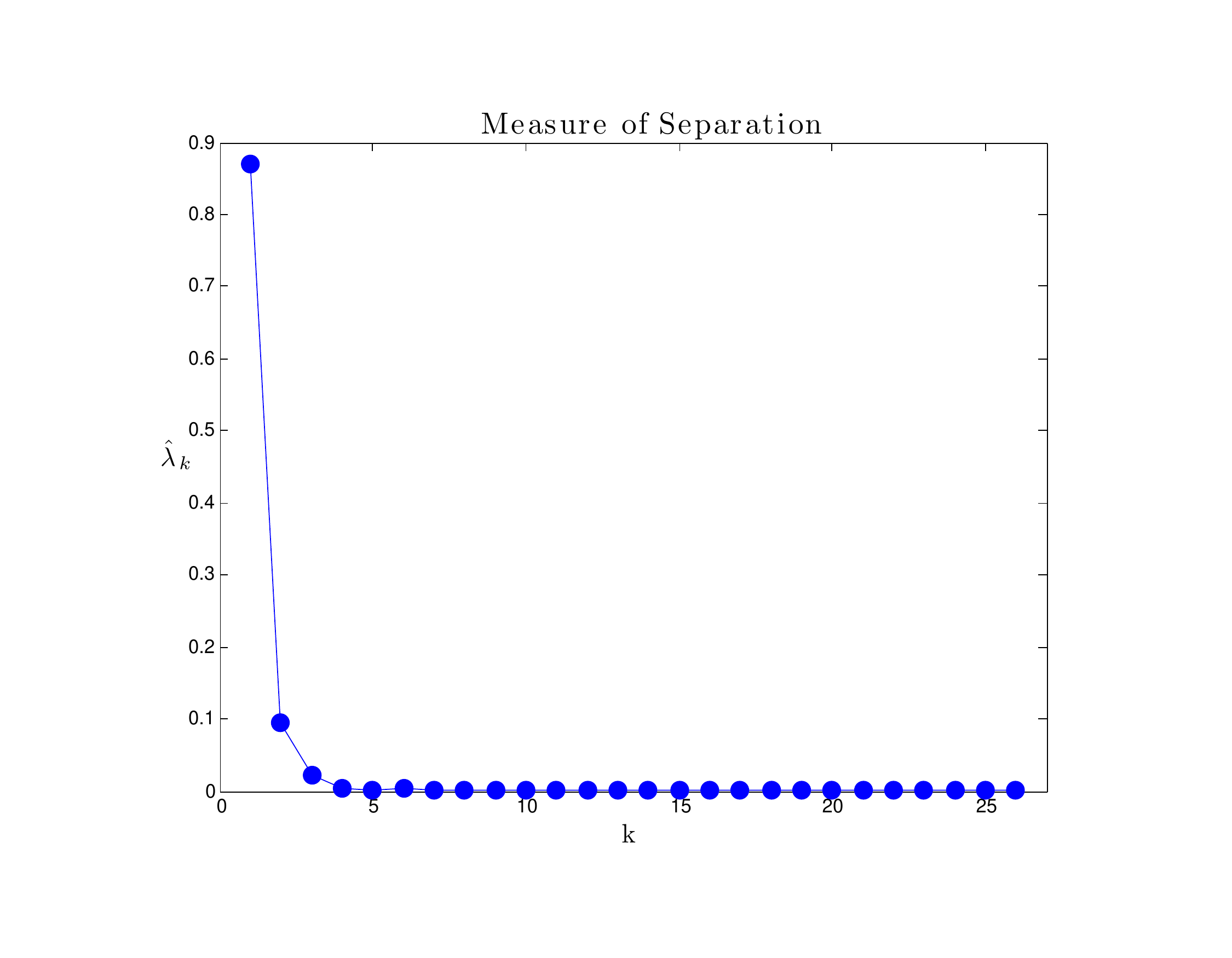}
	\vspace{-0.5in}
	\caption{Measure of separation for the eigenvalues of C.} 
	\label{fig:T1700sep}
\end{figure}
In order to automatically determine the optimal decomposition of $\Lambda$ we can use the relative measure of separation given by
\begin{equation}
\label{eq:sep}
\hat{\lambda}_k = \frac{\lambda_k - \lambda_{k+1}}{\lambda_1}, \quad k = 1,2, \ldots,26.
\end{equation}
Then, the dimension of the active subspace, i.e. the number of eigenvalues stored within $\Lambda_1$, will be given by
\begin{equation}
\label{dim}
\rm{dim} = \argmax_{k = 1, \ldots, 26} \hat{\lambda}_k.
\end{equation}
While the index of the largest value of $\hat{\lambda}_k$ describes the location of the largest spectral gap, it is often convenient to consider only the first two values $\hat{\lambda}_1$ and $\hat{\lambda}_2$. Doing so limits the dimension of the active subspace to one and two respectively, which allows for easy visualization of the quantity of interest as a function of the active subspace and allows one to fit a curve or surface to the data. Plotting the values of $\hat{\lambda}_k$ results in Figure~\ref{fig:T1700sep}.

Clearly, with this measure of separation, the optimal choice for the dimension of the active subspace is merely one. Consequently, we store $\lambda_1$ in the matrix $\Lambda_1$ and the remaining eigenvalues $\lambda_i$, $i = 2,\ldots,27$, along the diagonal of $\Lambda_2$. The active subspace is then generated by linear combinations of the entries of $\mathbf{w}$, the first eigenvector.
Figure~\ref{fig:wv1700} displays the eigenvector corresponding to the maximal eigenvalue shown in Figure~\ref{fig:evalues1700}, and it can be seen that three parameters possess associated weights greater than $0.3$. These are the $9$th, $22$nd, and $25$th parameters as ordered in Table~\ref{tab:paramvalues}.  From the table these parameters can be identified as $K_4$, $\delta_4$, and $\delta_7$, which represent the increase in macrophage population due to the immune system, the death rate of the macrophage population, and the death rate of the virus population, respectively. Hence, small perturbations in these parameters will significantly alter the value of $T(1700)$, as they are the most heavily weighted. Contrastingly, changes within the remaining parameters, whose weights are near zero, will not have an appreciable affect on $T(1700)$.

Plotting $T(1700)$ along the active subspace results in Figure~\ref{fig:SSP1700}. Here, the horizontal axis is represented by values of the first active variable $y = \mathbf{x} \cdot \mathbf{w}$, which represents a linear combination of the normalized parameters $\mathbf{x}$ with weights given by entries of the first active variable vector $\mathbf{w}$. Adopting the terminology of \cite{Aspaces}, we will refer to plots of the quantity of interest along the active subspace as \emph{sufficient summary plots}. Figure~\ref{fig:SSP1700} (left) shows a clear trend, namely that $T(1700;y)$ is a decreasing function of the active variable. 
\begin{figure}[t]
\vspace{-0.3in}
\hspace{-46mm}
\centering
\begin{minipage}{0.42\textwidth}
\centering
\includegraphics[scale = 0.42]{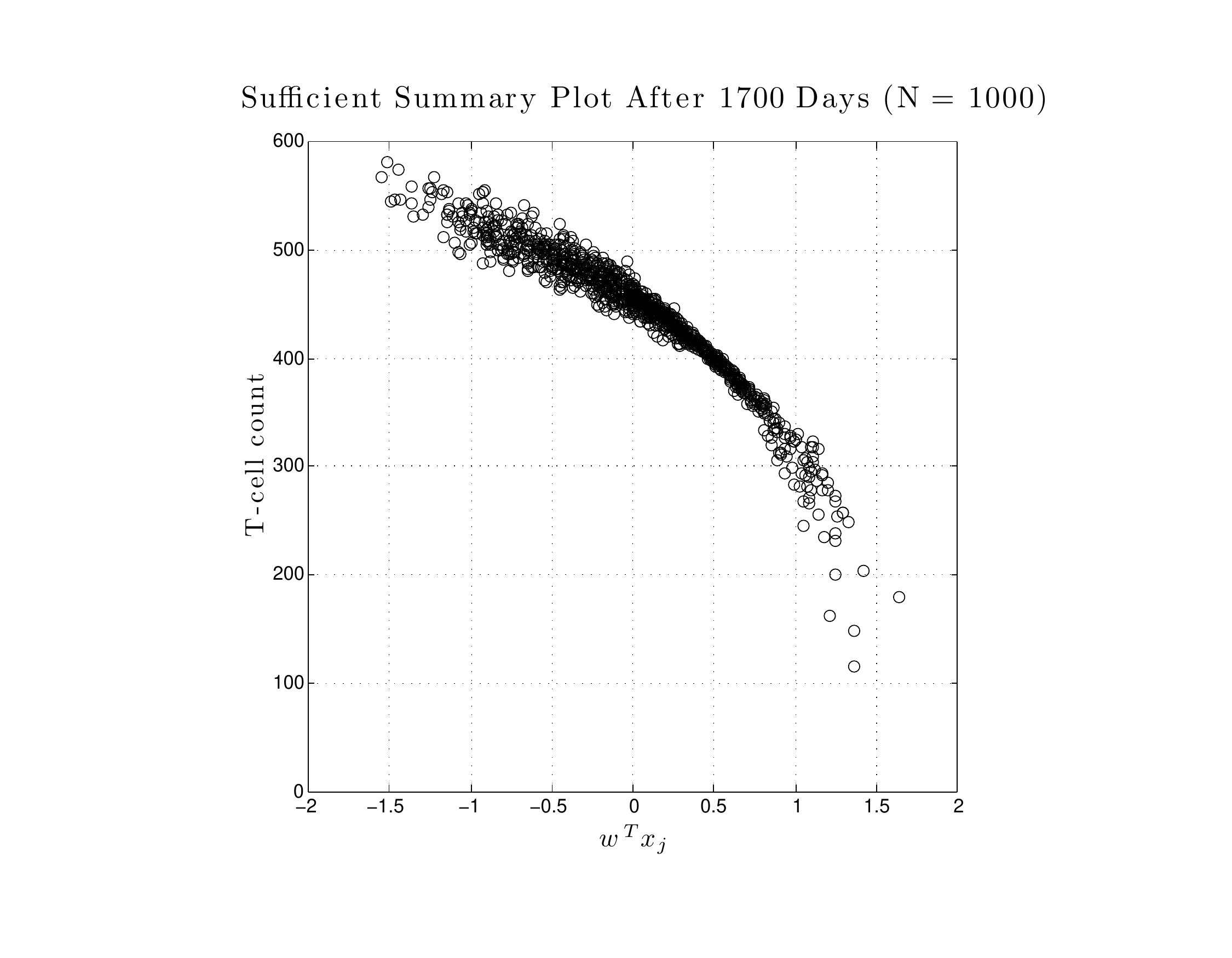}
\end{minipage}
\hspace{15mm}
\begin{minipage}{0.42\textwidth}
\centering
\hspace{-10mm}
\includegraphics[scale = 0.42]{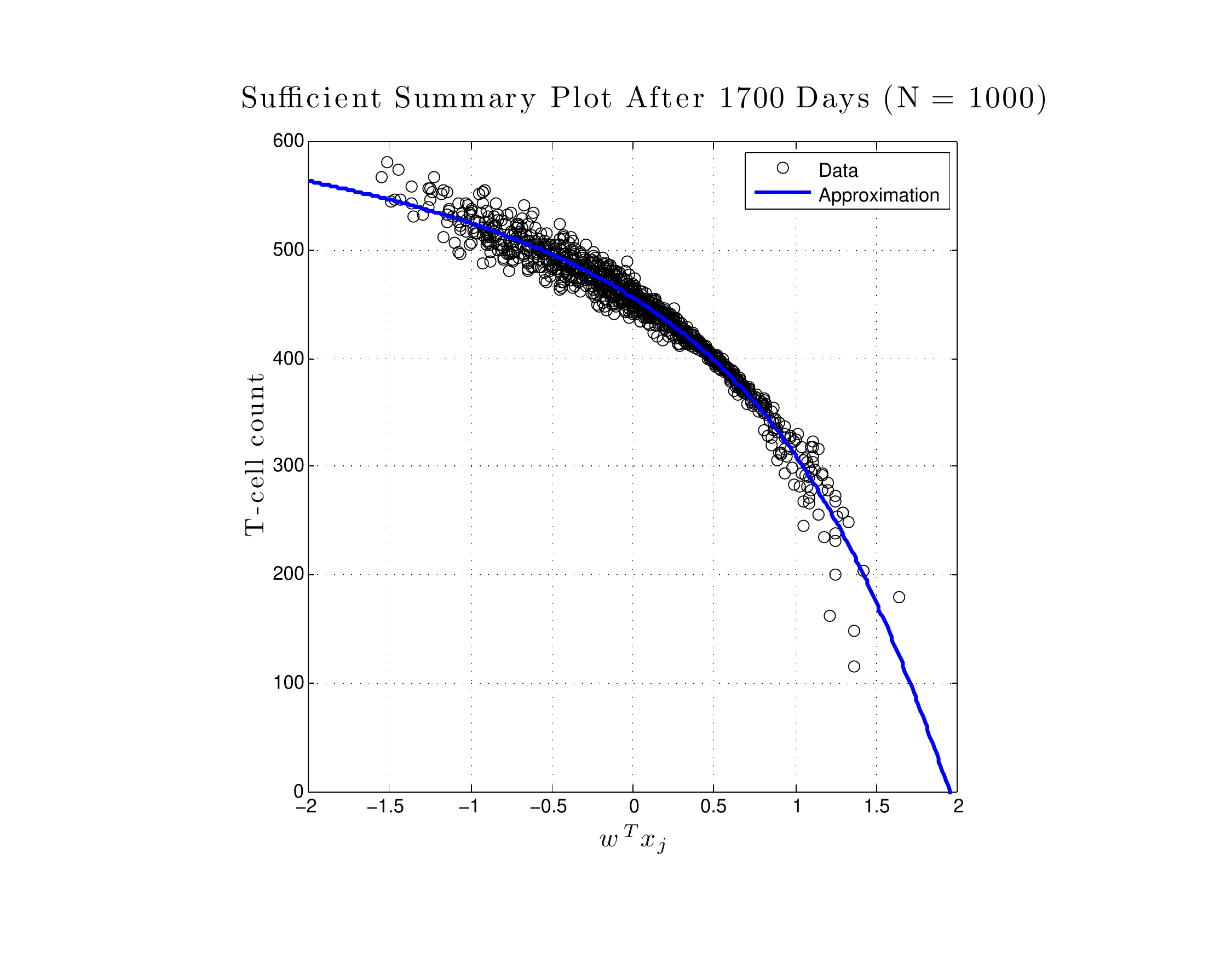}
\end{minipage}
\vspace{-0.4in}
\caption{Sufficient summary plot after 1700 days (left). Approximation to the T-cell count after 1700 days (right).}
\label{fig:SSP1700}
\end{figure}
%
%
Additionally, the data points within the sufficient summary plot can be fit to a particular function. In this case, a four parameter arctangent function is fit to the data to approximate the quantity of interest. This was performed using the MATLAB function \texttt{lsqcurvefit} which minimizes the residual (in the least squares sense) of the difference in the data and the approximation. Hence, we have determined that the data is best fit by
\begin{equation}
\label{analytic}
T(1700; y) = -79.2532 - 492.5680 \ {\rm tan}^{-1}(0.8933y  - 1.9069),
\end{equation}
and the resulting curve can be found in the plot on the right side of Figure~\ref{fig:SSP1700}.
%
%
In order to test the accuracy of the nonlinear approximation, $100$ simulations were run and the relative error computed. The results are displayed in Figure~\ref{fig:T1700errors} and indicate that the approximation is typically within $5\%$ from the value computed by computationally solving \eqref{Hadji} to determine $T(1700)$. Stated another way, over $90\%$ of the test simulations returned approximations whose error was $5\%$ or less. 

\begin{figure}[t]
	\vspace{-0.2in}
	\includegraphics[width = 0.6\textwidth]{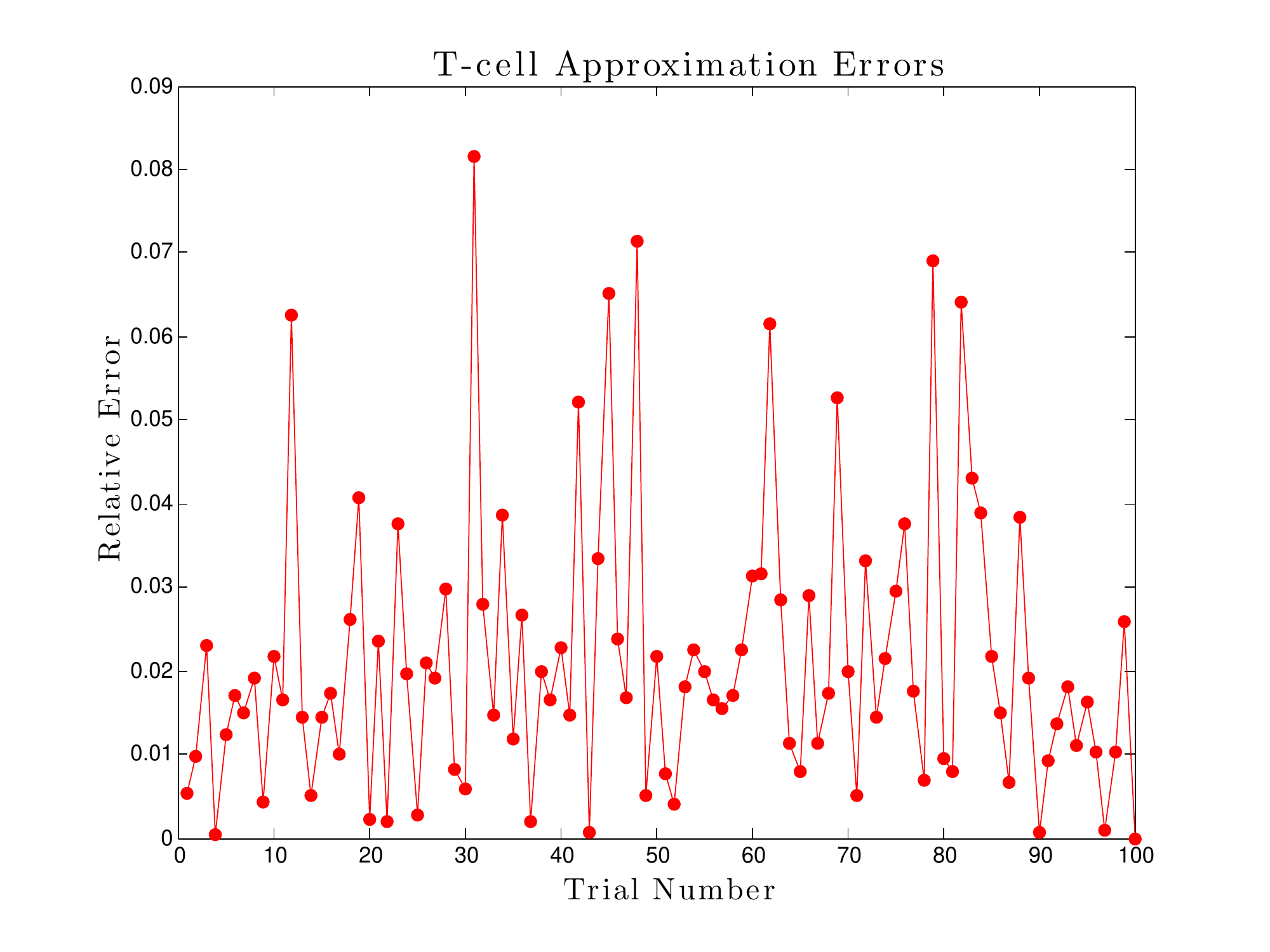}
	\vspace{-0.25in}
	\caption{Relative errors in the approximation of $T(1700)$.} 
	\label{fig:T1700errors}
	\vspace{-0.1in}
\end{figure}

Of course, not every eigenvalue decomposition will necessarily result in a one-dimensional active subspace of the model parameters.
For instance, should one wish to approximate the T-cell count $2000$ days after initial infection, the same order reduction method can be employed and the eigenvalues of the $C$ matrix can be computed.
The plot of the eigenvalues of $C$ can be seen in Figure~\ref{fig:GlobalSep} (left).
In this case the largest spectral gap exists between the second and third eigenvalues rather than the first and second, and therefore a more descriptive choice for the dimension of the active subspace is two instead of one.


\begin{figure}[t]
\hspace{-20mm}
\centering
\begin{minipage}{0.40\textwidth}
\centering
\includegraphics[scale = .40]{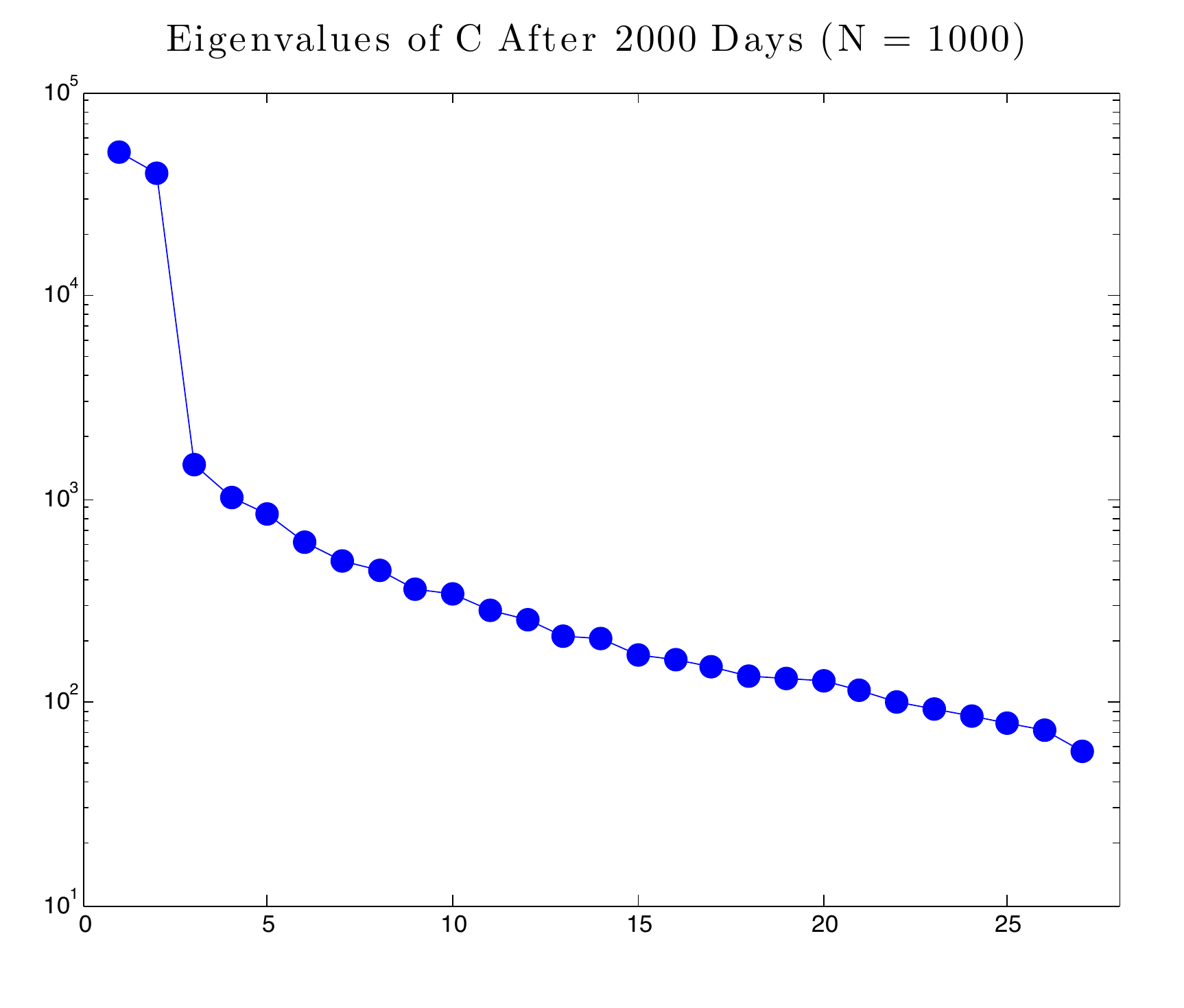}
\end{minipage}
\hspace{15mm}
\begin{minipage}{0.40\textwidth}
\centering
\hspace{-10mm}
\includegraphics[scale = 0.40]{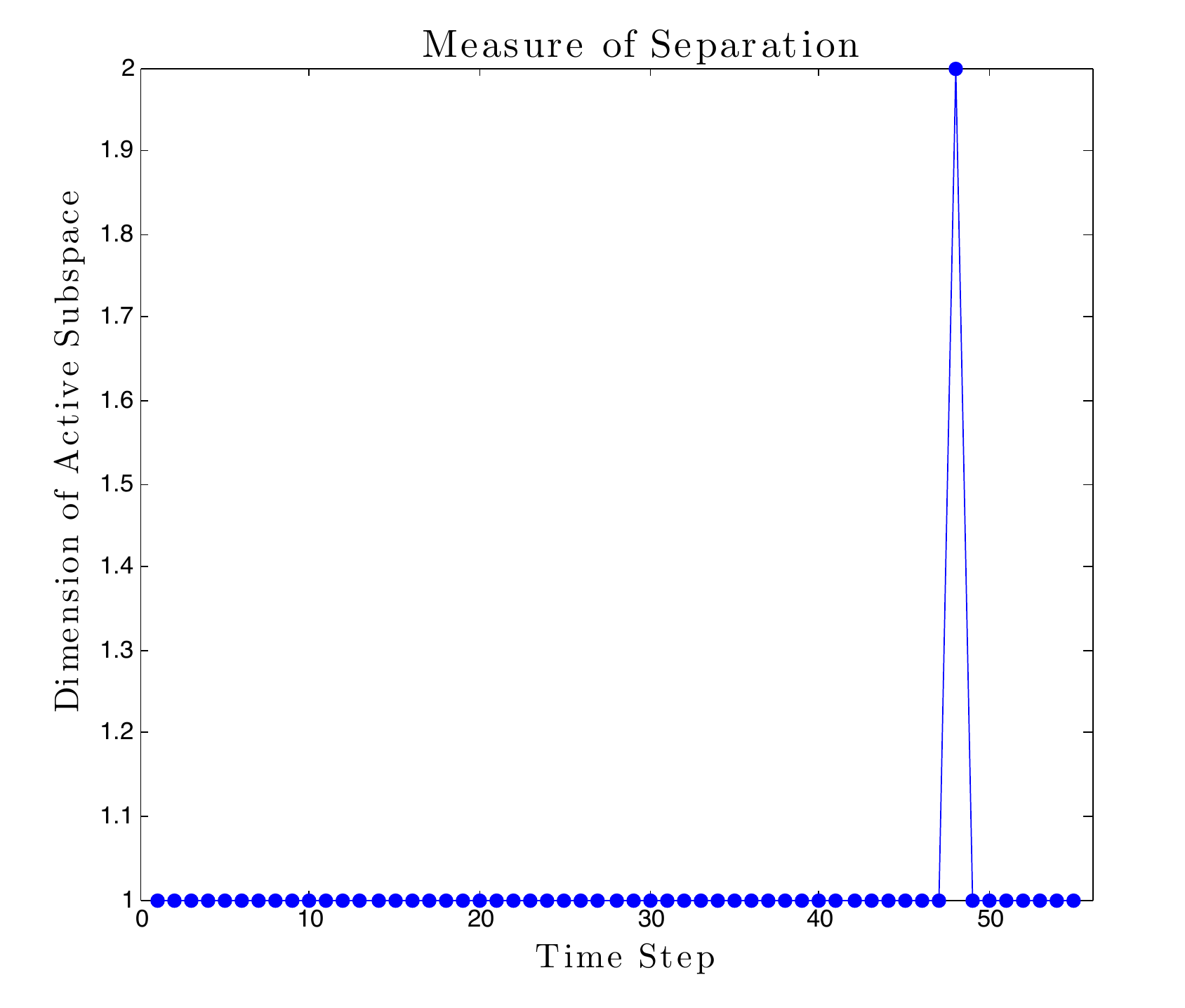}
\end{minipage}
\vspace{-0.2in}
\caption{Eigenvalues of the matrix $C$ after 2000 days (left). Dimension of the active subspace for each time (right).}
\label{fig:GlobalSep}
\vspace{-0.1in}
\end{figure}

The two weight vectors corresponding to these dominant eigenvalues are then computed and sufficient summary plots are obtained.
Figure~\ref{fig:SSP2D} shows the one and two dimensional sufficient summary plots, respectively, for the approximation of $T(2000)$. 
Considering the two dimensional sufficient summary plot in Figure~\ref{fig:SSP2D} (right), it can be seen that very little variation occurs within the T-cell count in the vertical direction, which represents the second active variable. All of the variation in the T-cell count appears to transpire mostly in the horizontal direction. Also, in Figure~\ref{fig:SSP2D} (left) we can see that the one dimensional sufficient summary plot clearly shows a distinct trend. For these reasons, the one-dimensional active subspace representation provides a sufficient description of the dynamics in the parameter space, while the second active variable doesn't appear to contribute as greatly.  Hence, even though the largest spectral gap appears between the second and third eigenvalues, a single linear combination of the parameters captures the overwhelming majority of the dynamics, and only the one-dimensional representation is utilized.  Figure~\ref{fig:GlobalSep} (right) displays the dimension of the active subspace at each time using the measure of separation and \eqref{dim}.

\begin{figure}[t]
\hspace{-24mm}
\centering
\begin{minipage}{0.4\textwidth}
\centering
\includegraphics[scale = 0.40]{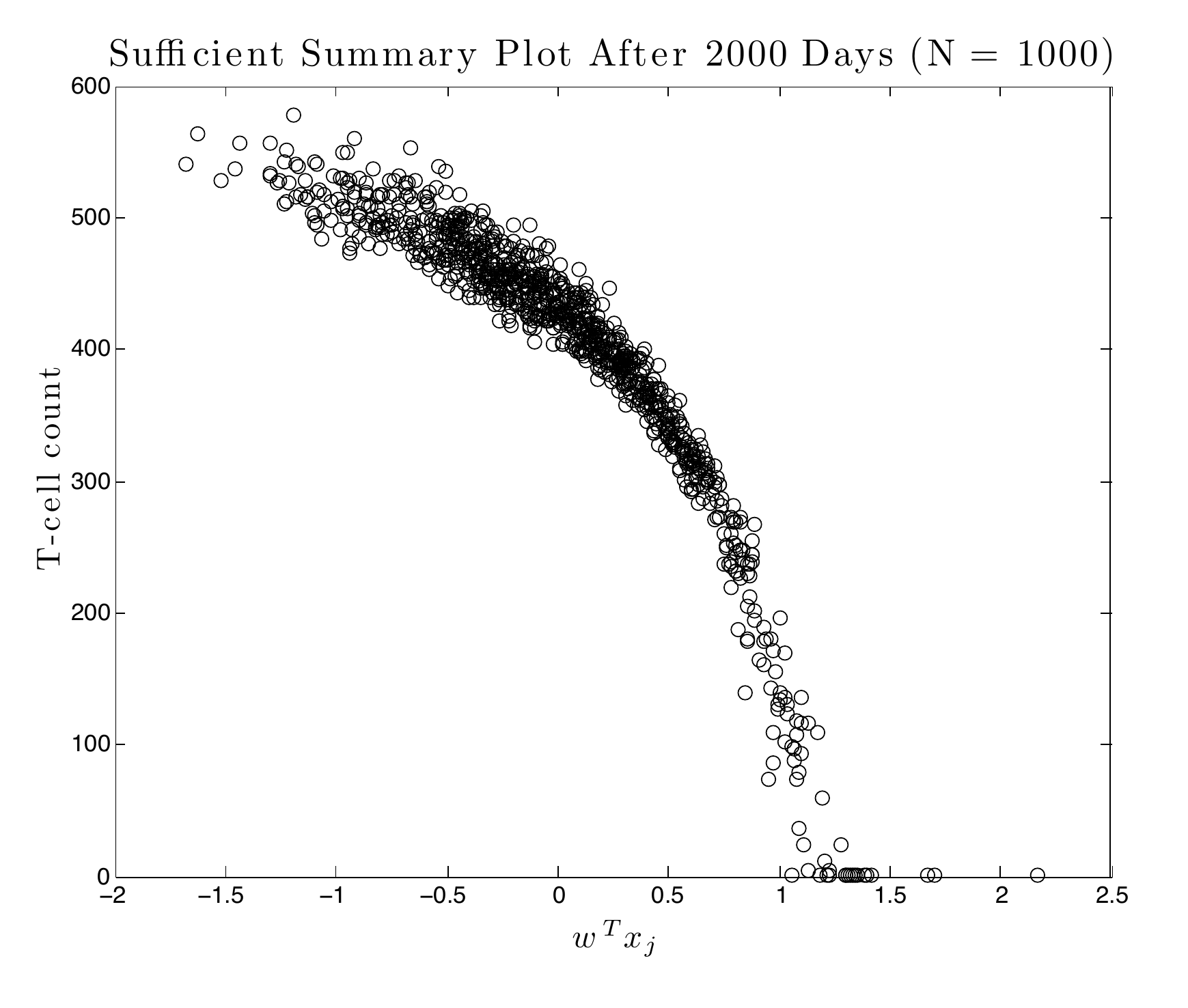}
\end{minipage}
\hspace{14mm}
\begin{minipage}{0.4\textwidth}
\centering
\hspace{-10mm}
\includegraphics[scale = .40]{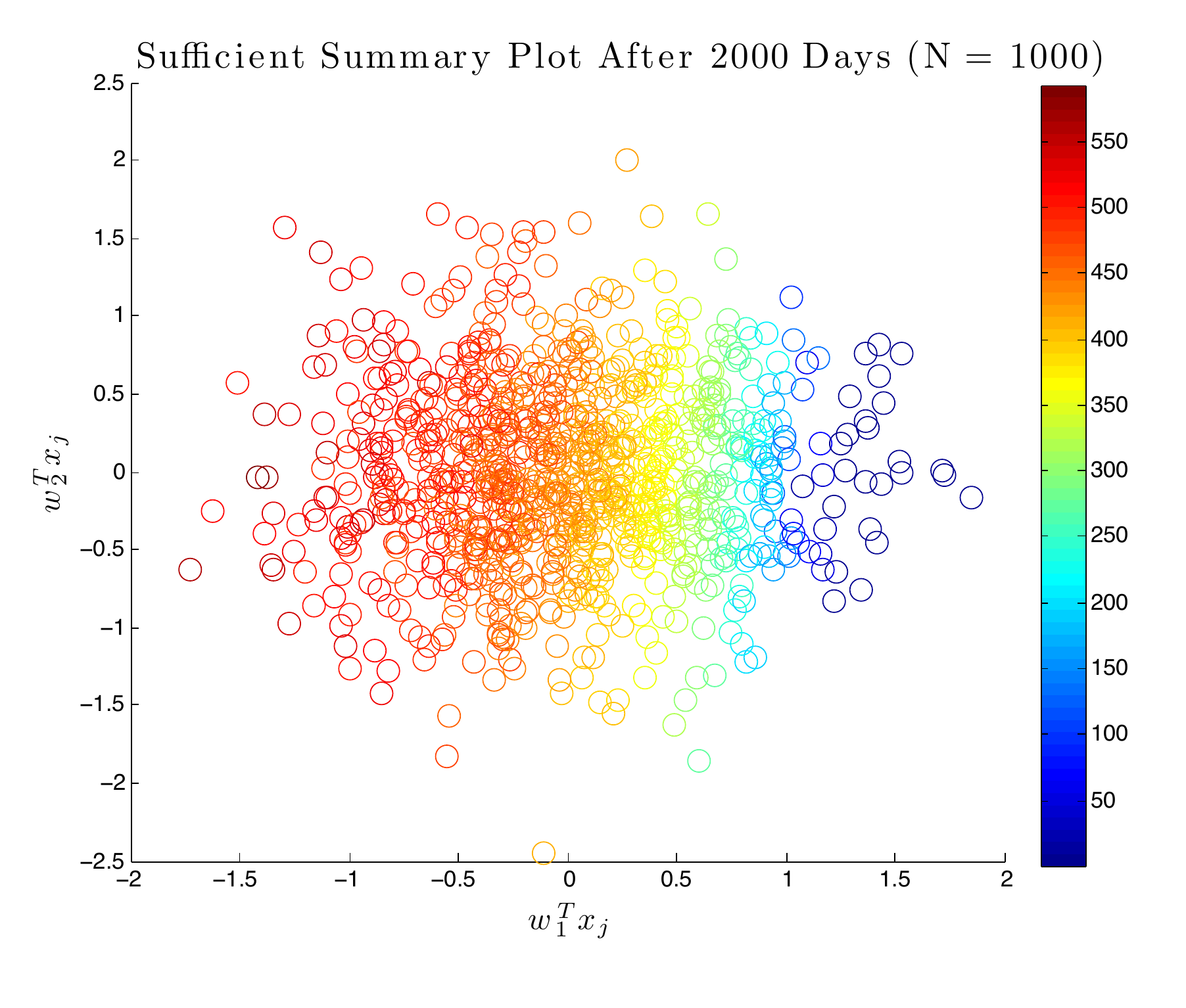}
\end{minipage}
\vspace{-0.2in}
\caption{Sufficient summary plots after $2000$ days, displaying the one-dimensional (left) and two-dimensional (right) active subspace representations}
\label{fig:SSP2D}
\vspace{-0.1in}
\end{figure}

Since this algorithm can be used to approximate the T-cell count at any fixed time, we can further compute active subspaces to reduce complexity in the parameter space and obtain a simple time course for $T(t)$.
We refer to such a model as a \emph{dynamic active subspace approximation}.  Within the next section we will demonstrate precisely how this reduced description can be constructed.

\subsection{Dynamic Active Subspaces}
Now that the approximation for $T(1700)$ has been computed, this method can be repeated for a predetermined discretization of the time domain in order to create a global-in-time approximation for the T-cell count, say $T(t; y)$. 
First, it is necessary to choose the mesh on which $T(t)$ will be approximated. In this case the time interval $[0, 3400]$ is divided into $86$ non-uniform subintervals and the active subspace decomposition, along with the associated eigenvalues and eigenvectors, is used to determine a functional approximation at each time step. 
Next, one must orient the eigenvectors to point in approximately the same direction so that they transition smoothly from one time step to the next. By this we mean that the magnitude of the components of the consecutive weight vectors differ only slightly, but because the normalized eigenvector decomposition is only unique up to a sign, weight vectors at different time steps must be oriented so that they do not point in opposing directions.
Upon correctly orienting the weight vectors, nonlinear curve-fitting is employed to compute the functions that best fit the associated sufficient summary plots. 
   
Once the approximations have been computed at each time step, the final task is to dynamically assemble them using linear basis functions, namely
\begin{equation}
\label{Tcell}
T(t; \mathbf{x}) \approx \sum_{i = 1}^{86} T_i \left (\mathbf{x} \cdot \mathbf{w}(t) \right )\phi_i(t)
\end{equation}
where $\mathbf{w}(t) = \sum \limits_{i=1}^{86} \mathbf{w}_i \phi_i(t)$ for all $t \in [0,3400]$ is a linear interpolation of the first active variable vector, $\mathbf{w}_i$, within the $i^{th}$ time interval, $y(t) = \mathbf{x} \cdot \mathbf{w}(t)$ is the corresponding active variable, $T_i(y)$ is the approximation to the T-cell count within the $i^{th}$ time interval, and $\phi_i(t)$ is the hat function on the interval $[t_{i-1},t_{i+1}]$ given by 
$$\phi_i(t) = \left \{
	\begin{aligned}
	\frac{t - t_{i-1}}{t_i - t_{i-1}}, & \quad t_{i-1} \le t \le t_{i} \\
	\frac{t_{i+1} - t}{t_{i+1} - t_i}, & \quad t_i \le t \le t_{i+1} \\
	0,\hspace{5mm} & \quad t \notin [t_{i-1},t_{i+1}].
	\end{aligned}
	\right.
$$
Recall that the original parameter values $\mathbf{p}$ and the normalized parameters $\mathbf{x}$ can be interchanged using \eqref{p}. Here we use the notation $T(t; \mathbf{x})$ instead of $T(t; y)$ because within the global-in-time approximation, the active variable vector, $\textbf{w}$ and hence the active variable $y$, changes within differing time intervals.  Of course, a higher-order approximation can be obtained by expanding $\phi_i(t)$ in a polynomial basis of greater degree.

\begin{figure}[t]
\vspace{-0.3in}
\hspace{-34mm}
\centering
\begin{minipage}{0.40\textwidth}
\centering
	\includegraphics[scale=.4]{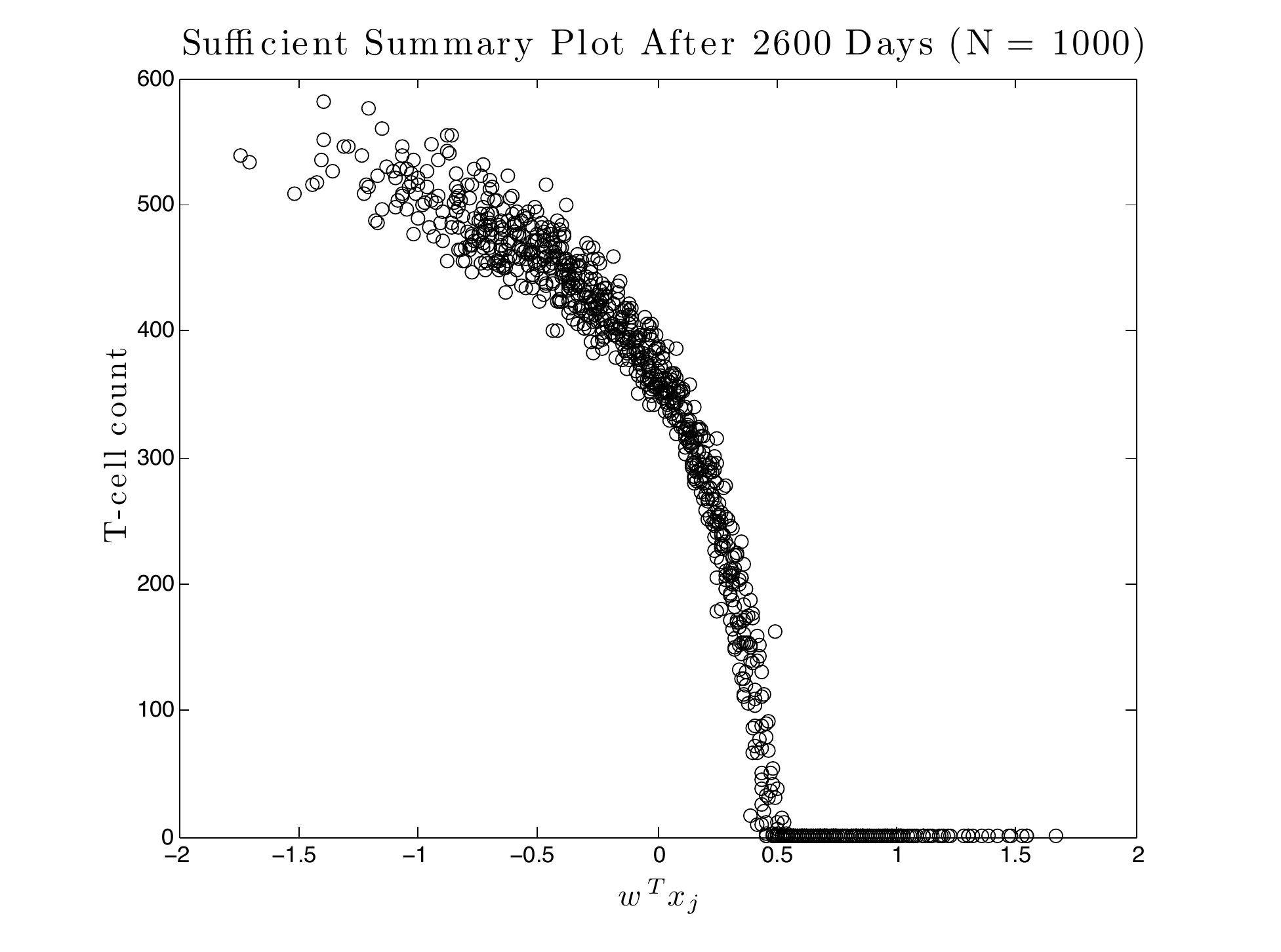}
\end{minipage}
\hspace{10mm}
\begin{minipage}{0.40\textwidth}
\centering
\hspace{-10mm}
	\includegraphics[scale=.4]{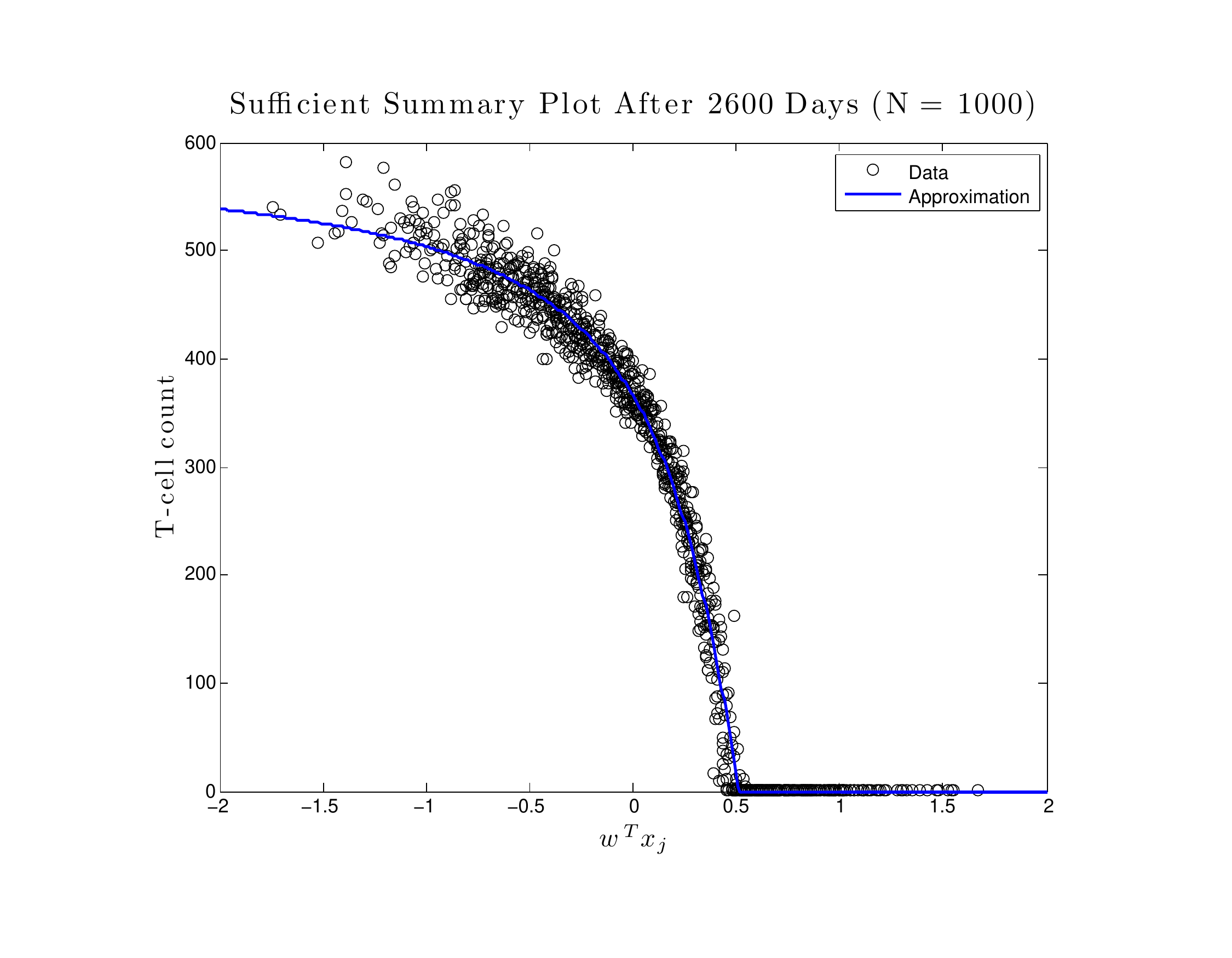}
\end{minipage}
\vspace{-0.35in}
\caption{Sufficient summary plot after 2600 days using 1000 trials (left). Same plot with function approximation (right).}
\label{fig:SSP2600}
\end{figure} 

The results of computing the active subspace and sufficient summary plot for $18$ of the $86$ time steps can be found in Appendix B.
With this we see that trends in the sufficient summary plot transition smoothly from one time step to the next and at each fixed time $t$ the T-cell count can be represented by one of three distinct functional forms. 
In particular, curves that are fit to the data generated within sufficient summary plots transition from (1) linear to (2) arctangent trends at or around $55-65$ days after initial transmission, while for times greater than or equal to $1800$ days, the trend resembles (3) an arctangent function multiplied by a heaviside step function.
The last of these aproximations occurs during later periods of the time course when the T-cell count may vanish for certain values of the active variable.  For example, Figure~\ref{fig:SSP2600} displays a sufficient summary plot of the T-cell count at $2600$ days after infection occurs.  For large values of the first active variable, say $y \geq 0.5$, we see that $T(2600; y) = 0$ since a proportion of sample runs result in a patient's T-cell count tending to zero prior to $2600$ days.  Contrastingly, for $y < 0.5$, the T-cell count trend resembles an arctangent function. 
These three specific functional approximations and their transitions comprehensibly display the biological aspects of all stages of infection dynamics.  Before discussing this further, a minor discussion of HIV disease pathogenesis is needed.

The time-course of HIV infection is characterized by three distinct stages: acute (or primary) infection, chronic infection (also referred to as the clinical latency stage), and the transition to AIDS. The first of these phases takes place within $10-20$ weeks of initial introduction of the virus within a host and is characterized by a rapid fluctuation in the T-cell and virion population. With respect to the T-cell population there is initially a rapid decrease from the introduction of the virus, and then a rapid rebound arising from the body's immune response. Symptoms during this phase of the infection include fever, swollen glands, fatigue, rash, and sore throat. The next stage, chronic infection, ranges from a number of years to over a decade without treatment. During the chronic period the T-cell and virion populations remain at relatively constant levels, with the T-cell population decreasing at a particularly slow rate and the number of virions increasing steadily. During the last stage of infection, the transition to AIDS occurs as the T-cell population reaches a density lower than 200 cells per mm$^3$.  The progression to AIDS is typically associated with a sharp decrease in the T-cell population within a year or so.

\begin{figure}[t]
\vspace{-0.1in}
\hspace{-15mm}
\begin{minipage}{0.3 \textwidth}
\centering
\includegraphics[scale = 0.33]{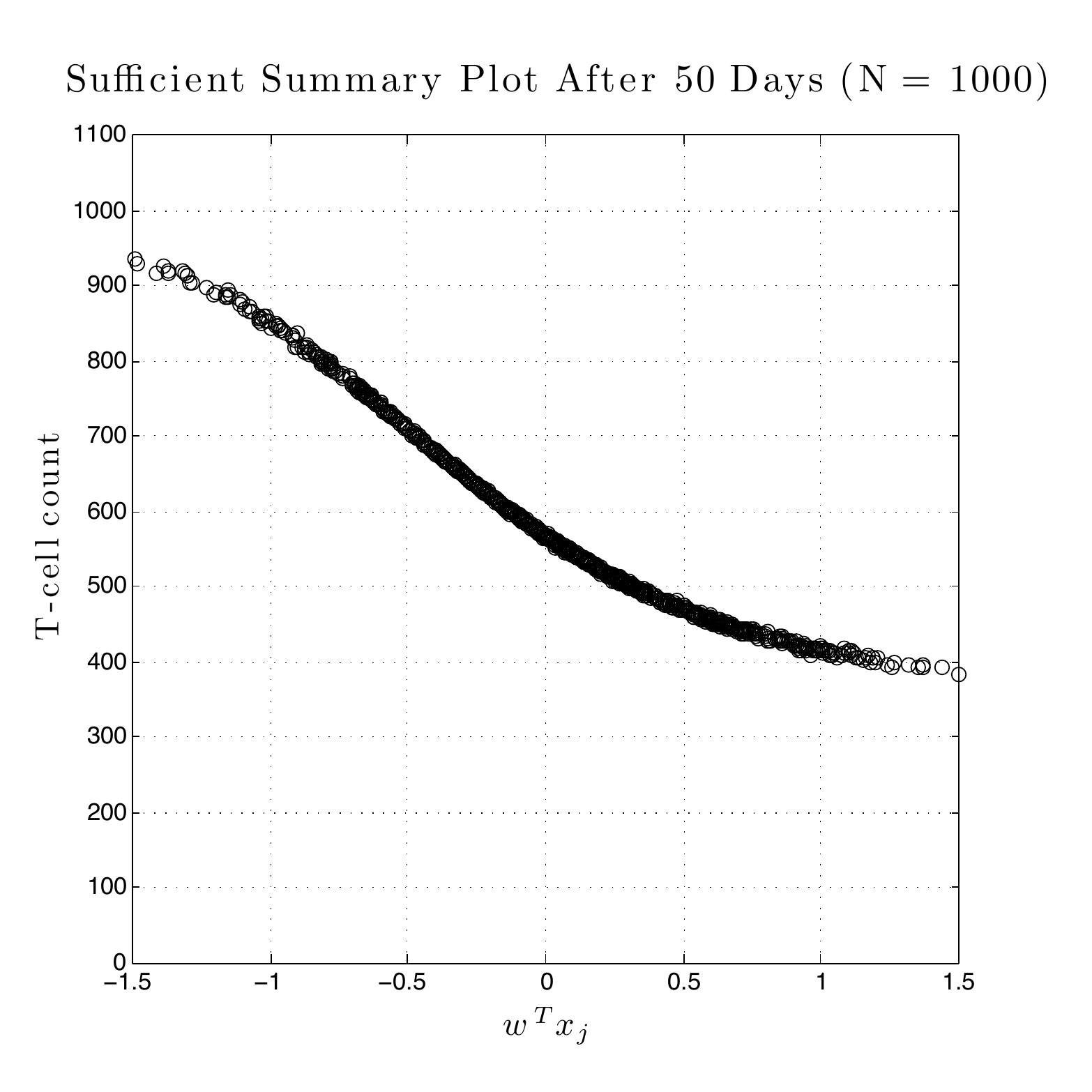}
\end{minipage}
\hspace{8mm}
\begin{minipage}{0.3 \textwidth}
\centering
\includegraphics[scale = 0.33]{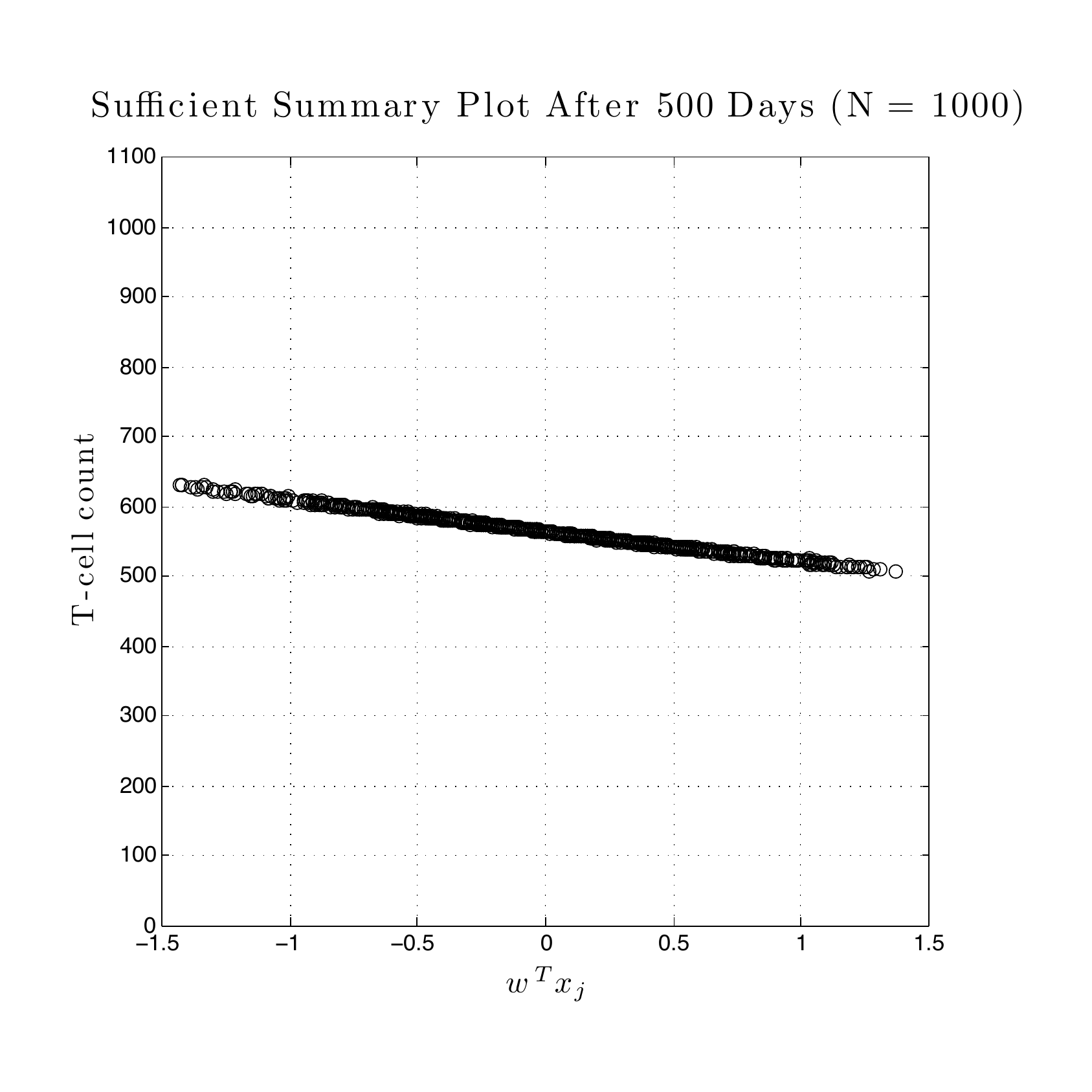}
\end{minipage}
\hspace{10mm}
\begin{minipage}{0.3 \textwidth}
\centering
\includegraphics[scale = 0.33]{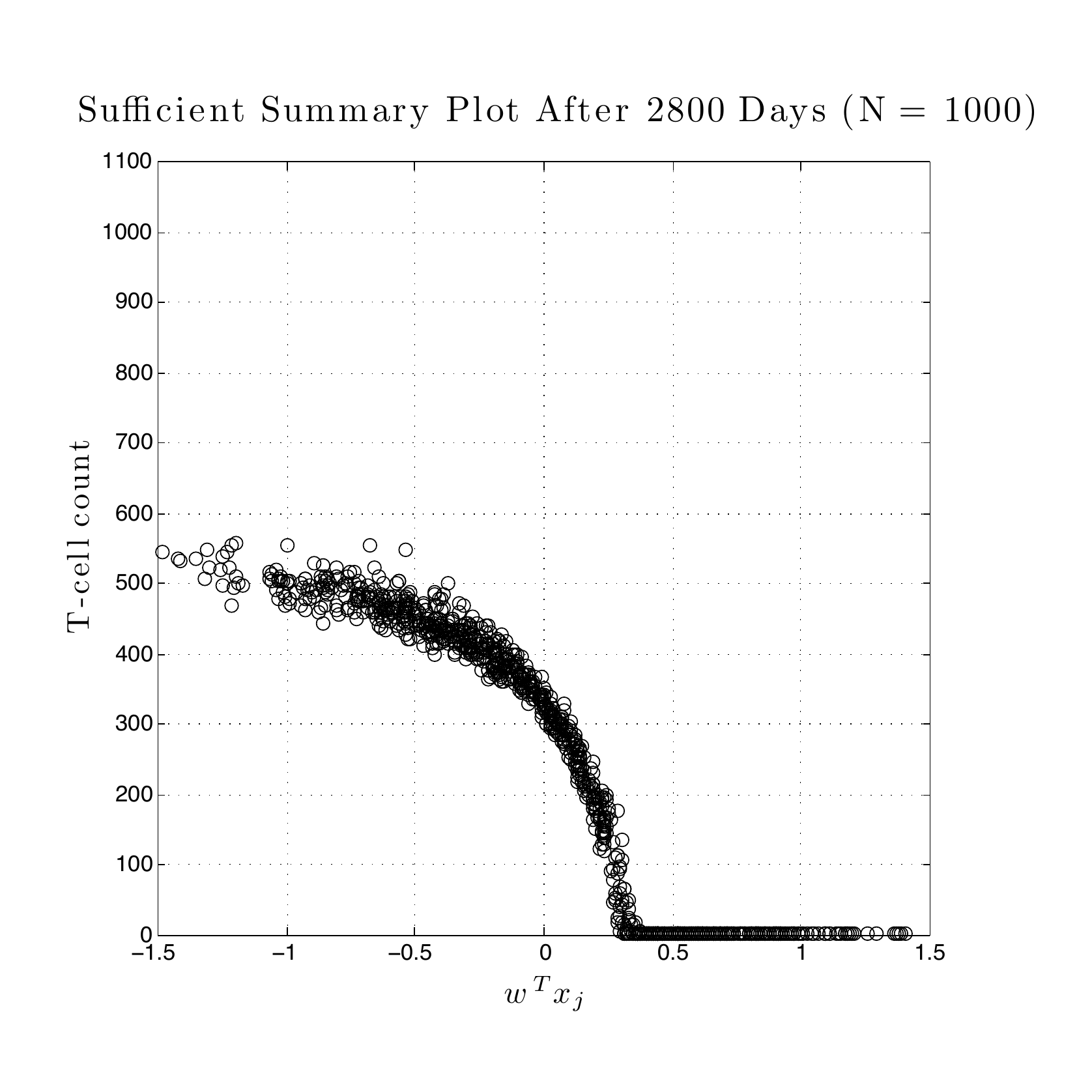}
\end{minipage}
\vspace{-0.3in}
\caption{Sufficient summary plots representing the three stages of infection - Acute (left), Chronic (center), AIDS (right)}
\label{fig:3stages}
\end{figure}

From the sufficient summary plots in Figure \ref{fig:3stages} and Appendix B, the three distinct stages of the disease become clear.
The three functional forms arising from sufficient summary plots precisely separate the three distinct stages of HIV disease progression within an infected individual - the initial arctangent function representing the acute phase, followed by a slow linear decline that denotes the asymptomatic or chronic phase, and finally the heaviside arctangent function detailing the decline of the T-cell count as a patient develops AIDS. 
This description of the T-cell population provides a detailed visual account of the three stages and the transitions between them.  
For clarity, a representative graph of each trend in the data is provided in Figure \ref{fig:3stages}.
Therefore, the abrupt changes in the T-cell count arising within the one-dimensional active subspace of parameters completely categorize the stages of disease pathogenesis.

In addition to separating these stages of the disease, the active subspace method allows for the creation of three different types of approximate models. First, a visual representation of the quantity of interest - in this case, the T-cell count - as a function of the most pertinent linear combination of parameters within the original model is provided.  Additionally, the method allows for the construction of an explicit, analytic model by combining nonlinear function approximations such as \eqref{analytic} over an interval of time.  In this direction a second approximation method is available instead of utilizing basis functions as in \eqref{Tcell}.  Namely, one may prescribe the functional form during a particular stage of the disease and fit time-dependent coefficients to the transitions within sufficient summary plots.  For instance, one may express the arctangent approximation over the timespan $t \in [0,55]$ by
$$T(t; y) = a(t) +b(t) \ {\rm tan}^{-1}(c(t)y  - d(t))$$
for well-behaved functions $a,b,c$, and $d$ that are fit to the changes in the data.  This type of explicit approximation would be easiest to implement during the asymptomatic phase for which 
$$T(t; y) = m(t)y + b(t)$$
and the slope and $T$-intercept functions, $m(t)$ and $b(t)$ respectively, are given by the piecewise-defined or smoothed functions approximated in Figure \ref{linfits}.

\begin{figure}[t]
\vspace{-0.1in}
\hspace{-20mm}
\begin{minipage}{0.4\textwidth}
\centering
	\includegraphics[scale=.4]{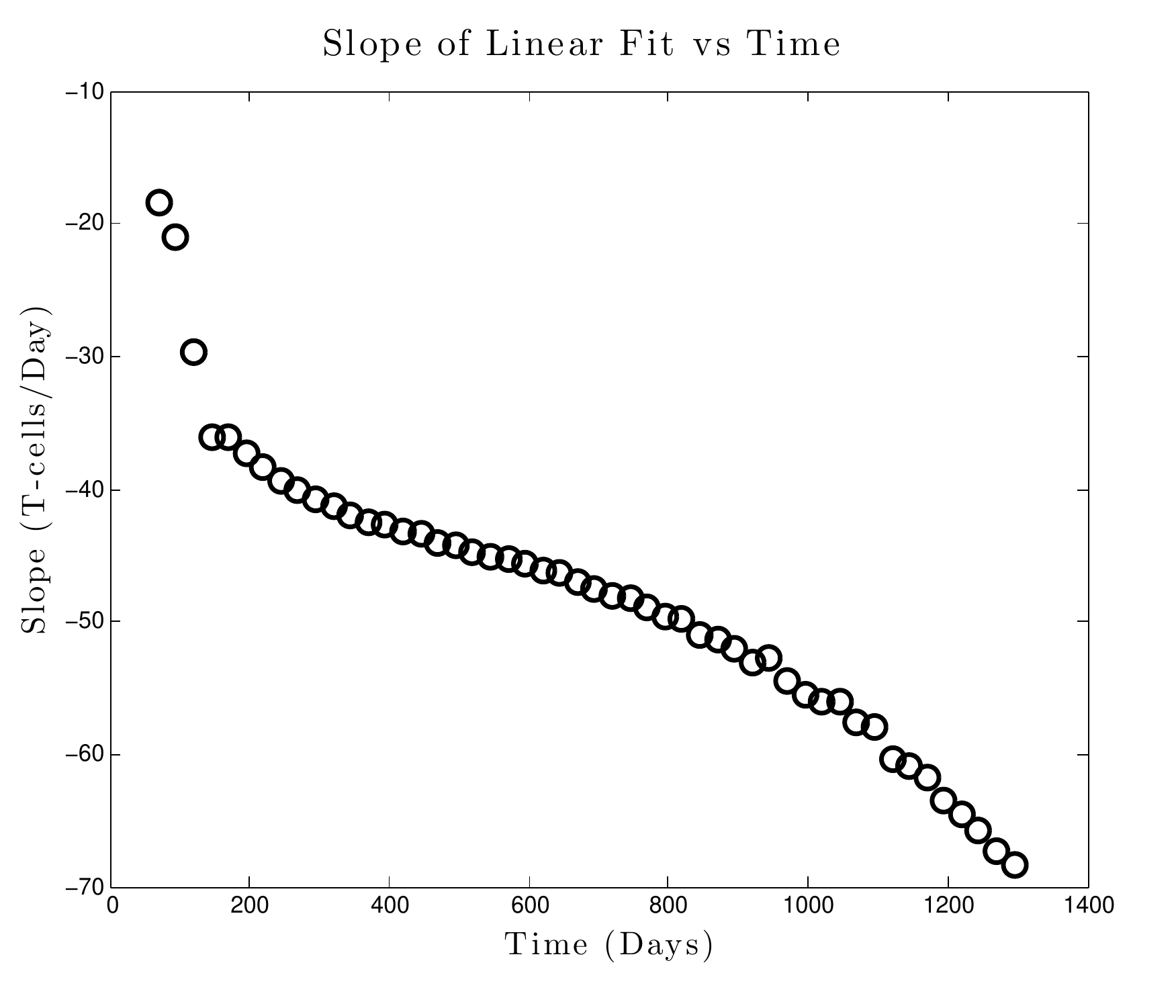}
\end{minipage}
\hspace{1.3cm}
\begin{minipage}{0.4\textwidth}
\centering
\hspace{-10mm}
	\includegraphics[scale=.4]{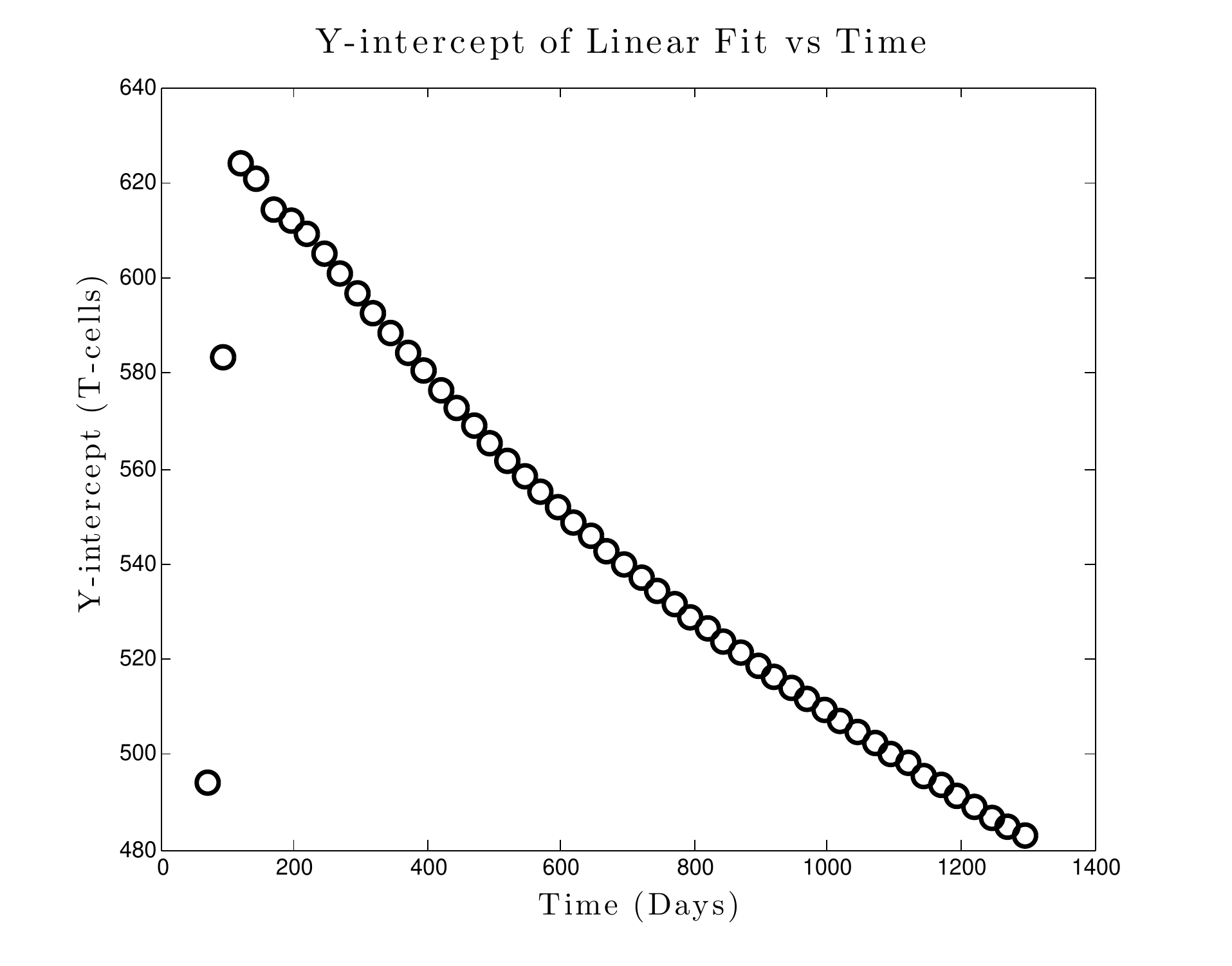}
\end{minipage}
\vspace{-0.2in}
\caption{Slope (left) and $T$-intercept (right) functions, $m(t)$ and $b(t)$, respectively for $t \in [55, 1300]$.}
\label{linfits}
\end{figure}

Lastly, the nonlinear fits arising from the sufficient summary data can be easily stored and supply the basis for a low-cost computational approximation without the need to simulate the original model over long time periods.  In contrast to simulating the full system of ODEs for each new set of parameter values, this computational model need only be precomputed once and can easily describe which of the parameters are most important and during which stages they significantly alter the biological quantities of interest.
As parameters within the original HIV model \eqref{Hadji} vary amongst patients, they must be recomputed for each individual, and the computational savings provided by the reduced model is vital.

Utilizing the dynamical algorithm represented by \eqref{Tcell}, this computational model can be constructed and compared with a representative simulation of the full dynamical system. Figure~\ref{fig:Globalapprox} displays both the simulated T-cell count and its global-in-time approximation using dynamic active subspaces.  We note that because the active subspace method is global with respect to the parameter space, the precise values of parameters in Table \ref{tab:paramvalues} do not necessarily influence the structure of the active subspace model, as long as a feasible range of parameter values is available.
%



In order to test the accuracy of the active subspace approximation to solutions of \eqref{Hadji}, $100$ independent simulations were performed and the relative error was computed. Within these error calculations a uniformly-distributed random time 
was selected along with a uniformly-drawn selection of the parameter values within their respective ranges. The result is shown in Figure~\ref{fig:GlobalError}, and displays that for $96\%$ of simulations the analytic approximation varied less than $5\%$ from the value given by the stiff differential equation solver.  Hence, one may conclude that the global-in-time active subspace model well-approximates solutions to the original system of ODEs given in \eqref{Hadji}.

\begin{figure}[t]
	\vspace{-0.1in}
	\hspace{-10mm}
	\includegraphics[scale=.5]{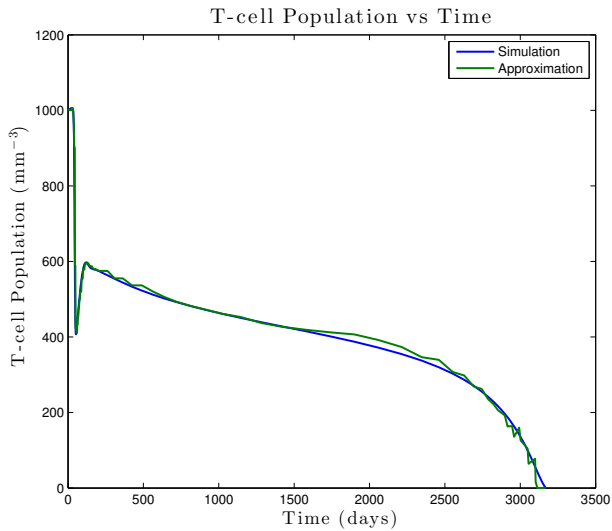}
	\vspace{-0.1in}
	\caption{Global-in-time approximation of the T-cell count.}
	\vspace{-0.1in}
	\label{fig:Globalapprox}
\end{figure}

\subsection{Dimension Reduction in the Parameter Space}
As determined in the previous section, not all parameters are required to describe the behavior of solutions within the model and those that are needed may not be important during each stage.
Therefore, it makes sense to investigate the construction of reduced models, namely those which eliminate the contributions of certain parameters.
Though the active subspace method has reduced the dimension of the parameter space upon which the T-cell count depends, all of the parameter values are still needed in order to compute the approximate solution, as the reduced parameter space has been expressed merely as a linear combination of the original parameter values, i.e. $y = \mathbf{x} \cdot \mathbf{w}$.
However, the weights (or coefficients) within this linear combination, given by $\mathbf{w}$, should provide a clear method to reduce the original parameter space as well.  For instance, assume that a parameter, say $x_3$, possesses a corresponding weight entry $w_3$ that is very small in comparison to the other weights.  Then, variations in $x_3$ will have little to no effect on the reduced parameter space $y$, and hence will not appreciably influence the output variable $T(t; y)$. Thus, one can merely eliminate $x_3$ within the model or set $x_3 \equiv 0$ rather than considering it as a parameter whose value could possibly vary depending upon the patient.

\begin{figure}[t]
\centering
	\includegraphics[width = 0.6\textwidth]{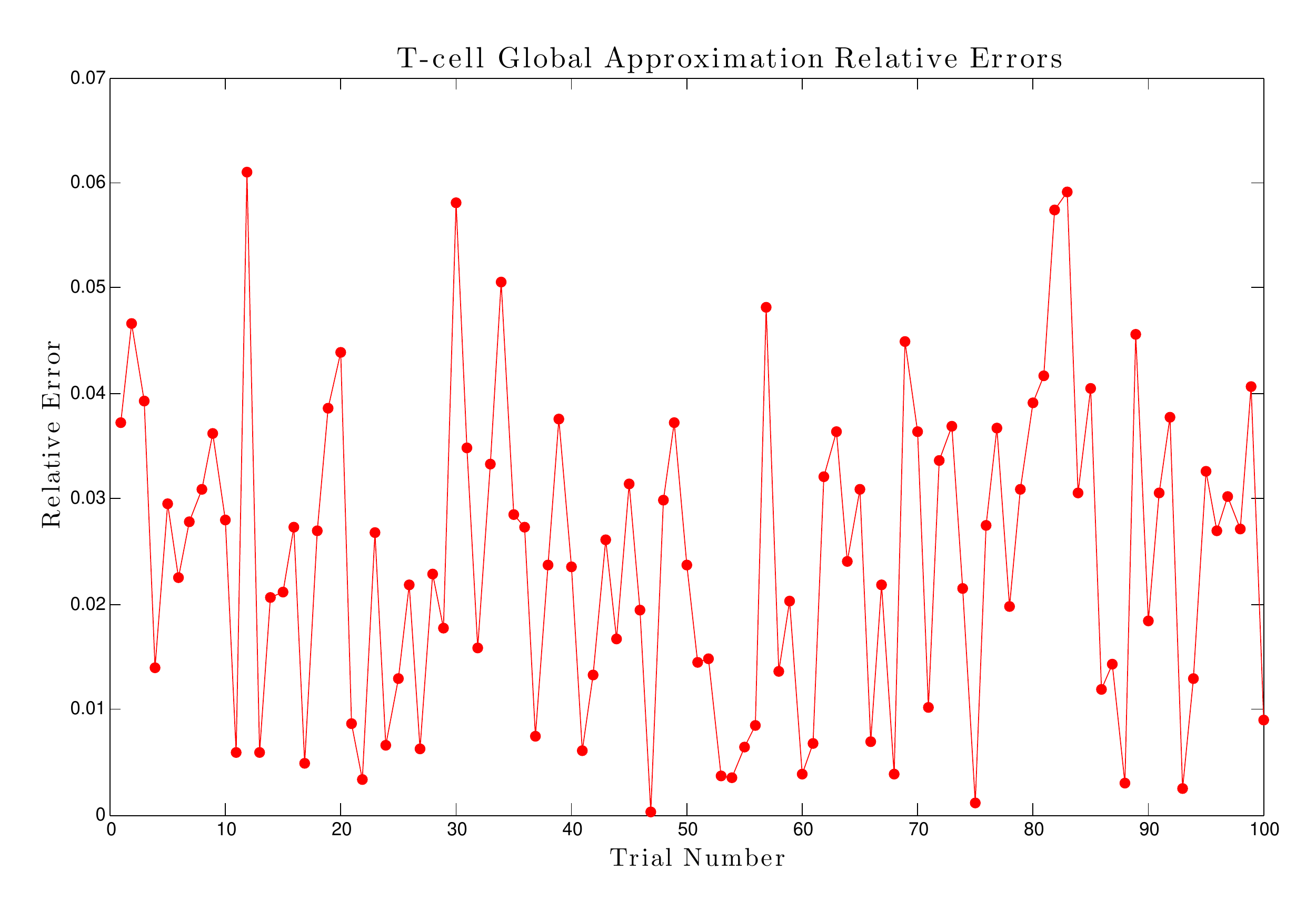}
	\vspace{-0.2in}
	\caption{Relative error in the global approximation of the T-cell count.} 
	\label{fig:GlobalError}
\end{figure}  

While it would be most beneficial to determine those weight vectors that remain small throughout the entire course of infection and remove the corresponding parameters, this is an unlikely scenario as different parameters will typically influence different stages in some substantial manner. For this reason, a true global-in-time parameter reduction is highly unlikely.  Indeed, the computed weight vectors for this study show that at most $3$ of the $27$ parameters - namely, $K_{11}$, $K_{12}$, and $K_{13}$ - could be eliminated without introducing enormous variations in the behavior of the model during some stage of infection.  Such a limited reduction in complexity would be marginally beneficial to the expense of computations.
Instead, we may separate the dynamics into the three distinct stages of the disease and use parameter reduction to create new models for each stage separately.  By eliminating the associated interactions from \eqref{Hadji} we may derive a simpler system that still accurately predicts each of the three phases of infection.

As an illustrative example, we consider the weight vectors arising within the first $40$ days of initial infection and remove parameters whose weights remain below a fixed threshold throughout this interval of time.  Using a relatively strict threshold, in this case $0.032$, a number of parameters are removed from \eqref{Hadji}, namely $s_2$, $s_3$, $K_2$, $K_4$, $K_5$, $K_6$, $K_7$, $K_8$, $K_{10}$, $K_{11}$, $K_{12}$, $K_{13}$, $\delta_3$, $\delta_4$, $\delta_5$, $\delta_6$, and $\alpha_1$. 
With these parameters eliminated, the $T_L$, $M$, $M_I$, and $CTL$ populations decouple from the remaining equations.
This implies that latently-infected T-cells, macrophages (both healthy and infected), and cytotoxic lymphocytes do not play an important role in the early behavior of the disease, i.e. within the first five to six weeks of introduction of the virus within the body.   Because these populations decouple, it's necessary to remove $K_3$ (whose maximum weight during the first $40$ days is $0.1256$) and set $\psi$ = 1, thereby yielding the reduced system for the acute stage, namely
\begin{equation}
\label{ShuttPankavich}
\left.
\begin{aligned}
\frac{dT}{dt} &= s_1 + \frac{p_1}{C_1+V}TV - \delta_1 T - K_1 TV \\
\frac{dT_I}{dt} &= K_1 TV - \delta_2 T_I \\
\frac{dV}{dt} &= K_9 T_I - \delta_7 V. 
\end{aligned}
\right \}
\end{equation}
This model is an augmented form of the well-known three-component model (see \cite{Roemer}) with an additional Michaelis-Menten term within the T-cell population, which accounts for the homeostatic proliferation of such cells upon depletion of this compartment due to interactions with virions. This model was recently analyzed in detail in \cite{PS2} and found to possess many desirable properties, including bistable equilibria which explain the dependence of infection dynamics on the initial T-cell count and viral load, as well as the existence of a Hopf bifurcation which describe oscillations within the system.
Using the fitted parameter values given in \cite{PS2} for the model \eqref{ShuttPankavich} and plotting against the full model \eqref{Hadji} with the standard parameter values given in Table \ref{tab:paramvalues} results in Figure \ref{fig:ReducedModel}.

Hence, for the acute stage we have reduced \eqref{Hadji} with $27$ independent parameters, to \eqref{ShuttPankavich}, which features only $8$ parameters. Following the same procedure for the other two stages would further reduce the dimension of the parameter space in the model \eqref{Hadji} at later stages of the infection.
These reduced models could then be used to analyze the disease separately within each distinct stage of infection and with much less computational cost.  
If a global model is still preferred in comparison to separating models by stage, one can perform this parameter reduction for all three stages subject to the constraint that values of $T(t)$ within time-adjacent models be equal and a new global-in-time dynamical model arises by construction.  Such a reduced, long-term model would be comparable to that of Figure \ref{fig:Globalapprox}. 

\begin{figure}[t]
\centering
	\includegraphics[width = 0.66\textwidth]{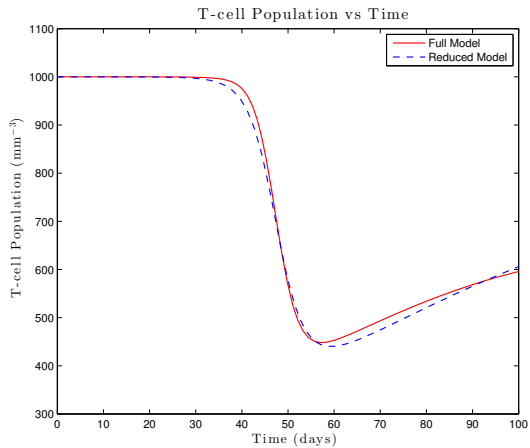}
	\vspace{-0.2in}
	\caption{Full HIV model versus reduced HIV model for the first 100 days.  Parameter values within the reduced model are $s_1 = 10$, $p_1 = 0.2$, $C_1 = 55.6$, $\delta_1 = 0.01$, $K_1 = 4.72 \times 10^{-3}$, $\delta_2 = 0.69$, $K_9 = 5.37\times 10^{-1}$, and $\delta_7 = 2.39$ } 
	\label{fig:ReducedModel}
\end{figure}  

\section{Conclusion}

The current study concerns an analysis of the system \eqref{Hadji}, which is one of the only mathematical models to accurately represent all three stages of HIV infection within a host. A unique, biologically-relevant virus-free equilibrium was shown to exist, and conditions were determined that guarantee the local asymptotic stability of this state. Then, using dynamic active subspaces, the system of seven ODEs with $27$ parameters was approximated by algebraic and computational models while retaining the majority of pertinent information and system behavior. This method enabled the discovery of a dominant subspace of parameters, within which the T-cell count is most sensitive to perturbations, and allowed us to perform thorough parameter studies in this direction, rather than the complete $27$-dimensional parameter space.
In general, active subspace methods also provide for a deeper visualization of the dependence of solutions on the parameter space, as well as, an analytic model \eqref{Tcell} to describe the quantity of interest and a computational model for which solutions are less expensive to construct. The efficacy of the model \eqref{Tcell} was investigated by calculating the relative error compared to solutions of \eqref{Hadji} solved with a stiff ODE solver.

As with any study, a number of future questions arise from our investigation.
First, as many parameter values within \eqref{Hadji} are not specifically known and vary greatly in the literature, one would like to quantify the uncertainty inherent in choosing a particular value for each.  While this can be investigated using the approximations established herein, it was not the focus of the study and more complex mathematical tools are needed to do so. 
Another direction to consider is the explicit quantification of error within the lower-dimensional approximations provided by active subspaces.  Some preliminary results appear in \cite{Aspaces}, but exact bounds and convergence theorems for dynamic, rather than time-independent, active subspaces are currently unavailable within the literature.
Finally, we note that these methods can be used for other in-host models of HIV (or other physical and biological models \cite{CZC}) that possess high-dimensional parameter spaces.  For instance, a recent refinement of \eqref{Hadji} was proposed in \cite{HVM}, and appears to display less sensitivity to variations in parameter values.  Hence, a similar technique would likely be useful to conduct a parameter study or construct a reduced-order model for this system.






\section*{Acknowledgments} The authors would like to thank Prof. Paul Constantine for helpful discussions and advice.


\appendix

\section{Proofs of Theorems}
\noindent In the first appendix, we outline the proofs of the theorems stated in Section~\ref{ch:model}.
\begin{proof}[Proof of Theorem \ref{T1}]
\label{proof1}
Beginning with (\ref{Hadji}), we search for steady states by assuming that all time derivatives are zero within the equations, and attempt to solve for the constant states $(T, T_I, T_L, M, M_I, CTL, V)$.  Assuming $V=0$ within the system of ODEs provides a significant reduction in the complexity
of the system.  The $M$ equation implies $M = \frac{s_2}{\delta_4}$.  Using this within the equation for $M_I$ implies that either $M_I = 0$ or $CTL = -\frac{\delta_5}{K_6}$.

Consider the latter case first. Multiplying the $T_I$ equation by $(1 - \psi)$ and the $T_L$ equation by $\psi$ and adding gives
\begin{equation}
\label{blah}
 0 = (\alpha_1 + \psi \delta_3)T_L + \Bigg( \frac{(1 - \psi)K_3 \delta_5}{K_6} - (1 - \psi) \delta_2 \Bigg) T_I.
\end{equation}

Creating a linear system with \eqref{blah}, the $CTL$ equation, and the $V$ equation then solving for $T_I$, $T_L$, and $M_I$ gives
$$T_I = K_{10} \omega, \qquad
T_L = \frac{K_{10} \xi \omega}{K_6 (\alpha_1 + \delta_3 \psi)}, \qquad
M_I  = -K_9 \omega$$
where $$ \omega = \frac{s_3 K_6 + \delta_5 \delta_6}{\delta_5 (K_7 K_{10} - K_8 K_9)} \quad \rm{and} \quad \xi = (1 - \psi)(\delta_2 K_6 - \delta_5 K_3).$$ 
Lastly, inserting the value of $M_I$ into the $T$ equation and solving for $T$ yields
\begin{equation*}
T = \frac{s_1}{\delta_1 - \omega K_2 K_9}
\end{equation*}

Finally, consider the former case.  Then, it follows from the equation for $V$ that $T_I = 0$ as well. Collecting these terms in the $CTL$ differential equation implies that $CTL = \frac{s_3}{\delta_6}$. The equations for $T_I$ and $T_L$ together imply $T_L = 0$, and finally, with the remaining populations determined, the first equation implies $T = \frac{s_1}{\delta_1}$.  Hence, we find the steady state
$$ E_{NI} := \left (\frac{s_1}{\delta_1}, 0, 0, \frac{s_2}{\delta_4}, 0, \frac{s_3}{\delta_6}, 0 \right ).$$

We note that assuming all parameter values are strictly positive implies that the only non-infective steady state of biological significance is $E_{NI}$.
\end{proof}

Finally, we sketch the proof of the asymptotic stability result, which utilizes standard methods from the theory of dynamical systems (i.e. the Hartman-Grobman and Routh-Hurtwitz theorems) to determine the qualitative behavior of the $E_{NI}$ steady state from (\ref{Hadji}).

\begin{proof}[Proof of Theorem \ref{T3}]
As the model is not positivity-preserving a simple technique like the next-generation method does not apply.  Hence, we will utilize the Hartman-Grobman Theorem to arrive at the stated result and omit some of the more technical details.
We begin by computing the Jacobian of (\ref{Hadji}) evaluated at the steady states $E_{NI}$

\begin{frame}
\footnotesize
\arraycolsep=2pt 
\medmuskip = 1mu 
$$ \footnotesize
J(E_{NI}) = \left(
\begin{array}{ccccccc}
 -\delta_1 & 0 & 0 & 0 & -\frac{K_2 s_1}{\delta_1} & 0 & \frac{\left(p_1-c_1 K_1\right) s_1}{c_1 \delta_1} \\
 0 & -\frac{\delta_2 \delta_6+K_3 s_3}{\delta_6} & a_1 & 0 & \frac{p K_2 s_1}{\delta_1} & 0 & \frac{p K_1 s_1}{\delta_1} \\
 0 & 0 & -a_1-\delta_3 & 0 & -\frac{(p-1) K_2 s_1}{\delta_1} & 0 & -\frac{(p-1) K_1 s_1}{\delta_1} \\
 0 & 0 & 0 & -\delta_4 & 0 & 0 & \frac{\left(K_4-K_5\right) s_2}{\delta_4} \\
 0 & 0 & 0 & 0 & -\frac{\delta_5 \delta_6+K_6 s_3}{\delta_6} & 0 & \frac{K_5 s_2}{\delta_4} \\
 0 & \frac{K_7 s_3}{\delta_6} & 0 & 0 & \frac{K_8 s_3}{\delta_6} & -\delta_6 & 0 \\
 0 & K_9 & 0 & 0 & K_{10} & 0 & -\delta_7-\frac{K_{11} s_1}{\delta_1}-\frac{\left(K_{12}+K_{13}\right) s_2}{\delta_4} \\
\end{array}
\right).
$$
\end{frame}

From this, we can see that three eigenvalues are certainly real and negative
$$\lambda_1 = - \delta_1, \quad \lambda_2 = - \delta_4, \quad \lambda_3 = - \delta_6.$$ 
The remaining four eigenvalues are more difficult to identify as they are determined by the quartic equation
$$ a_4 \lambda^4 + a_3 \lambda^3+ a_2 \lambda^2+ a_1 \lambda + a_0 = 0$$
where 
$$a _ 4 = \delta_1 \delta_4 \delta_6^2 > 0$$
\begin{eqnarray*}
a _ 3 & = &  \alpha_1 \delta_1 \delta_4 \delta_6^2+\delta_4 \delta_6^2 K_{11} s_1+\delta_1 \delta_6^2 K_{12} s_2+\delta_1 \delta_6^2 K_{13} s_2+\delta_1 \delta_4 \delta_6 K_3 s_3\\
& \ & +\delta_1 \delta_4 \delta_6 K_6 s_3+\delta_1 \delta_2 \delta_4 \delta_6^2+\delta_1 \delta_3 \delta_4 \delta_6^2+\delta_1 \delta_4 \delta_5 \delta_6^2+\delta_1 \delta_4 \delta_7 \delta_6^2
> 0
\end{eqnarray*}
and $a_0$, $a_1$, and $a_2$ are given by much longer expressions and not necessarily positive.

Instead, we must impose conditions on each term to guarantee positivity, which is needed for the roots of the quartic to possess negative real part by the Routh-Hurwitz criteria. 
In particular, the two negative terms in $a_2$ are dominated by the remaining positive terms if and only if
$$K_1K_9 \leq \delta_2K_{11}$$ and
$$K_5K_{10} \leq (K_{12} + K_{13} ) \delta_5.$$
The same conditions imply the positivity of $a_1$.  For $a_0$, the negative terms are dominated by positive terms if and only if the two previous conditions hold and 
$$ K_2 K_5 K_9 s_1 s_2 \leq \delta_1 \delta_2 \delta_4 \delta_5 \delta_7 + \delta_4 \delta_5 K_1 K_4 s_1 + \delta_1 \delta_2 K_5 K_{10} s_2.$$
The final inequalities of the Routh-Hurwitz criteria are also implied by these conditions.
Hence,  defining
$$R_1 = \frac{K_1K_9}{\delta_2K_{11}},$$
$$R_2 = \frac{K_5K_{10}}{(K_{12} + K_{13} ) \delta_5},$$
$$R_3 = \frac{K_2 K_5 K_9 s_1 s_2}{\delta_1 \delta_2 \delta_4 \delta_5 \delta_7 + \delta_4 \delta_5 K_1 K_9 s_1 + \delta_1 \delta_2 K_5 K_{10} s_2},$$
and $$R_0 = \max \{ R_1, R_2, R_3 \}$$
we see that the equilibrium is locally asymptotically stable if and only if all three conditions are satisfied, and thus $R_0 \leq 1$.
\end{proof}

\section{Visual representation of three stages of infection}
In Figures \ref{fig:SSPs1}, \ref{fig:SSPs2}, and \ref{fig:SSPs3}, we display a more temporally refined representation of the three stages of infection displayed by the model.

\noindent\begin{minipage}{\textwidth}
\vspace{0.2in}
\hspace{-13mm}
\centering
\begin{minipage}{0.3\textwidth}
\centering
\includegraphics[scale = 0.33]{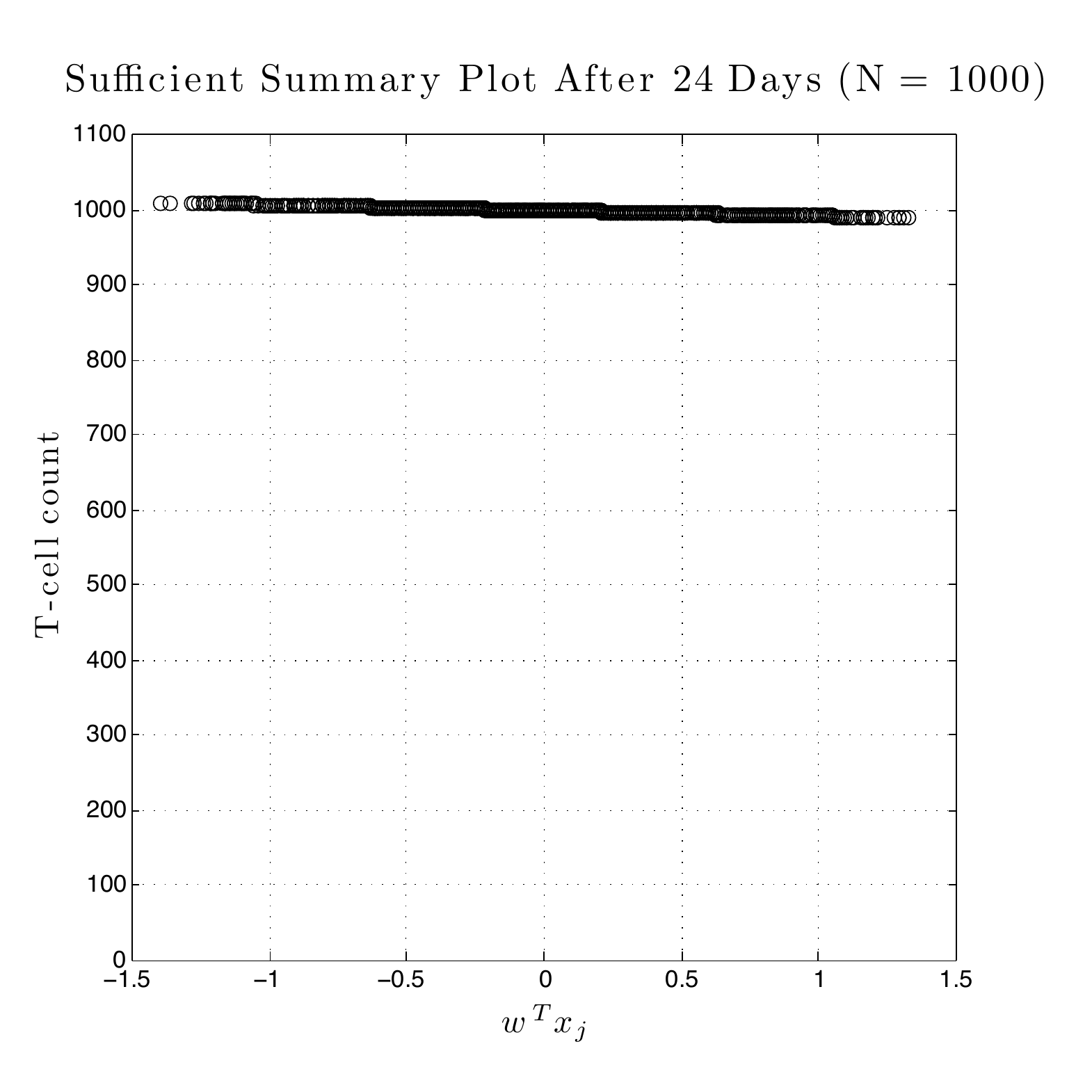}
\end{minipage}
\hspace{10mm}
%
\begin{minipage}{0.3 \textwidth}
\centering
\includegraphics[scale = 0.33]{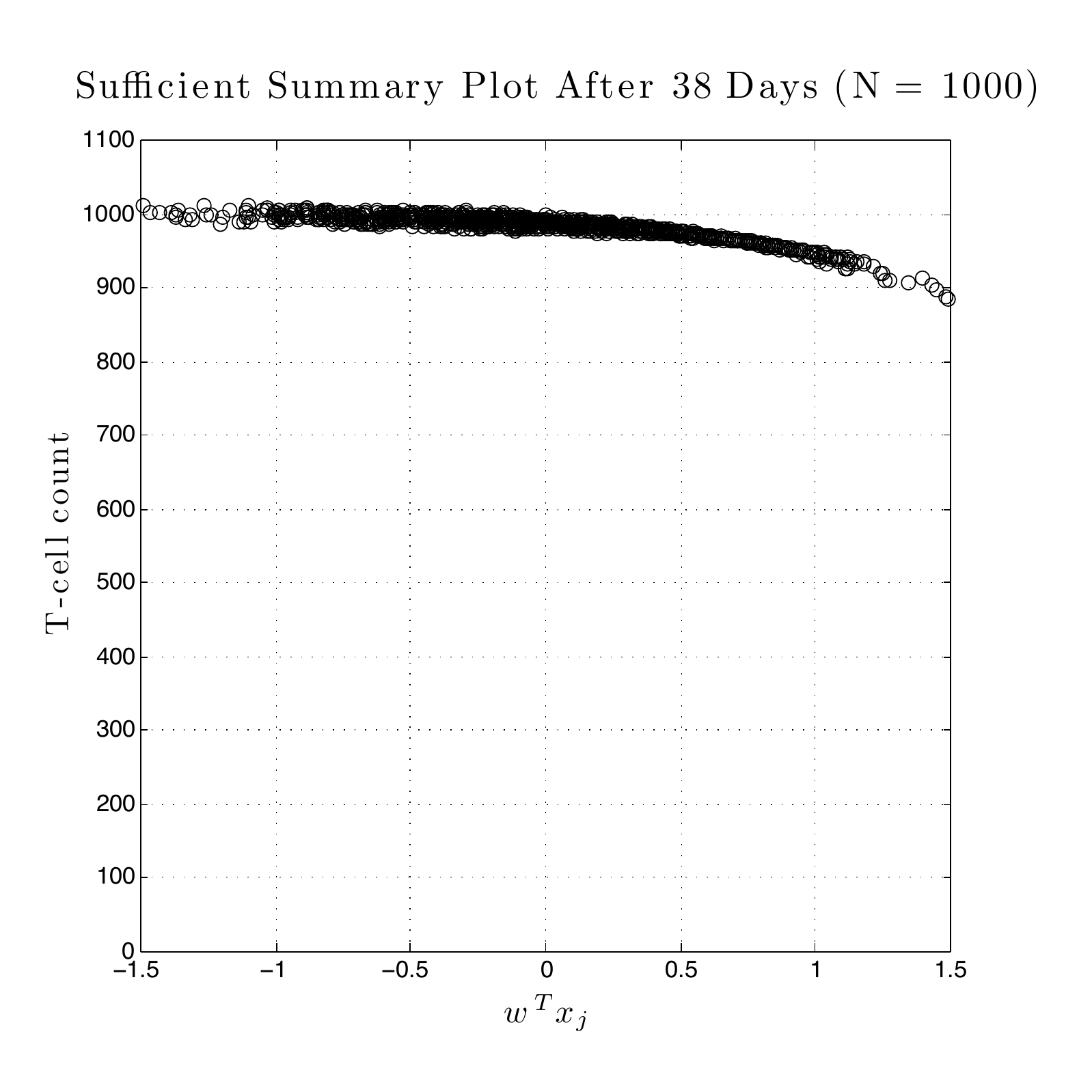}
\end{minipage}
\hspace{10mm}
\begin{minipage}{0.3 \textwidth}
\centering
\includegraphics[scale = 0.33]{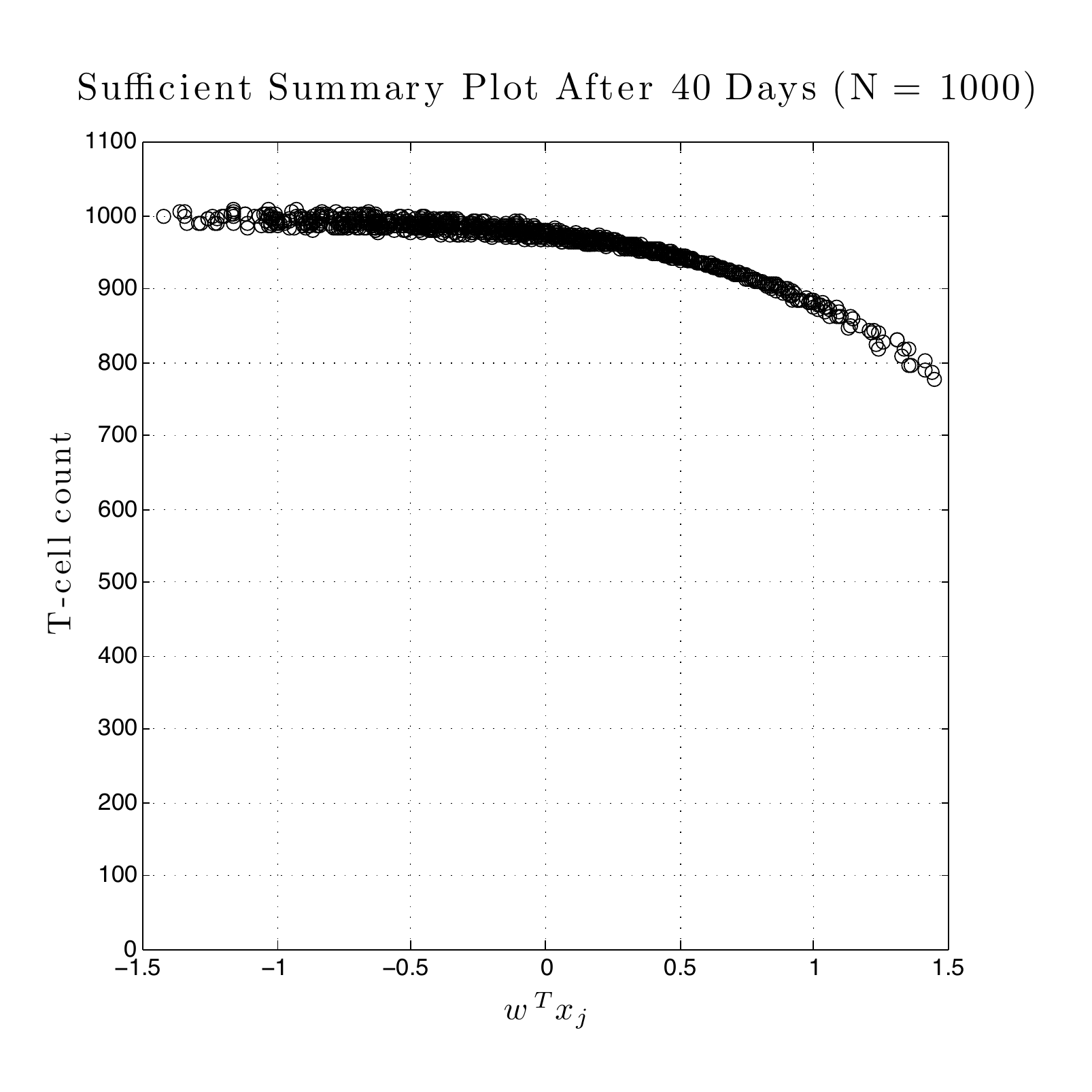}
\end{minipage}
\\
\vspace{-0.15in}
\hspace{-13mm}
\begin{minipage}{0.3 \textwidth}
\centering
\includegraphics[scale = 0.33]{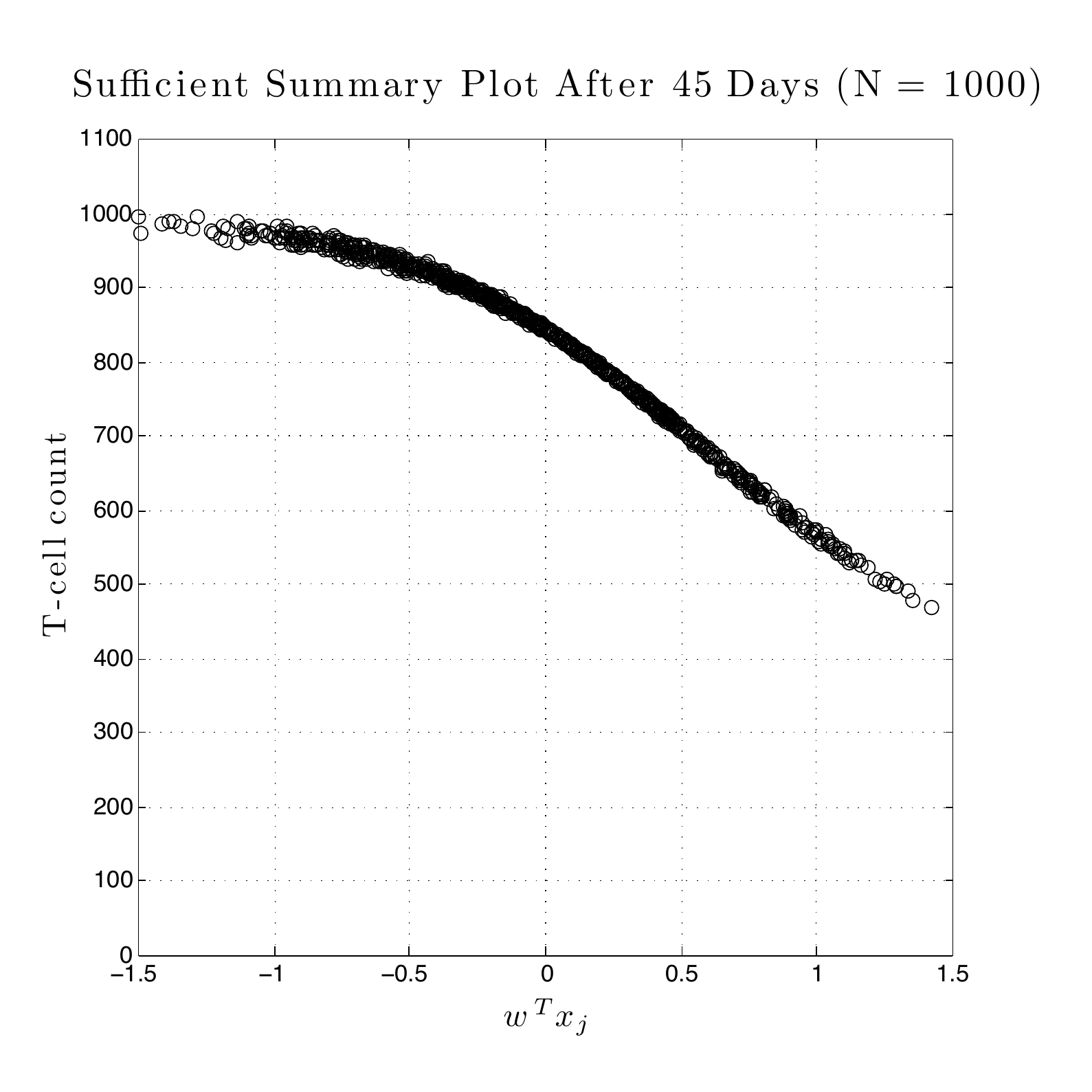}
\end{minipage}
\hspace{10mm}
\begin{minipage}{0.3 \textwidth}
\centering
\includegraphics[scale = 0.33]{SSP15.pdf}
\end{minipage}
\hspace{10mm}
%
\begin{minipage}{0.3 \textwidth}
\centering
\includegraphics[scale = 0.33]{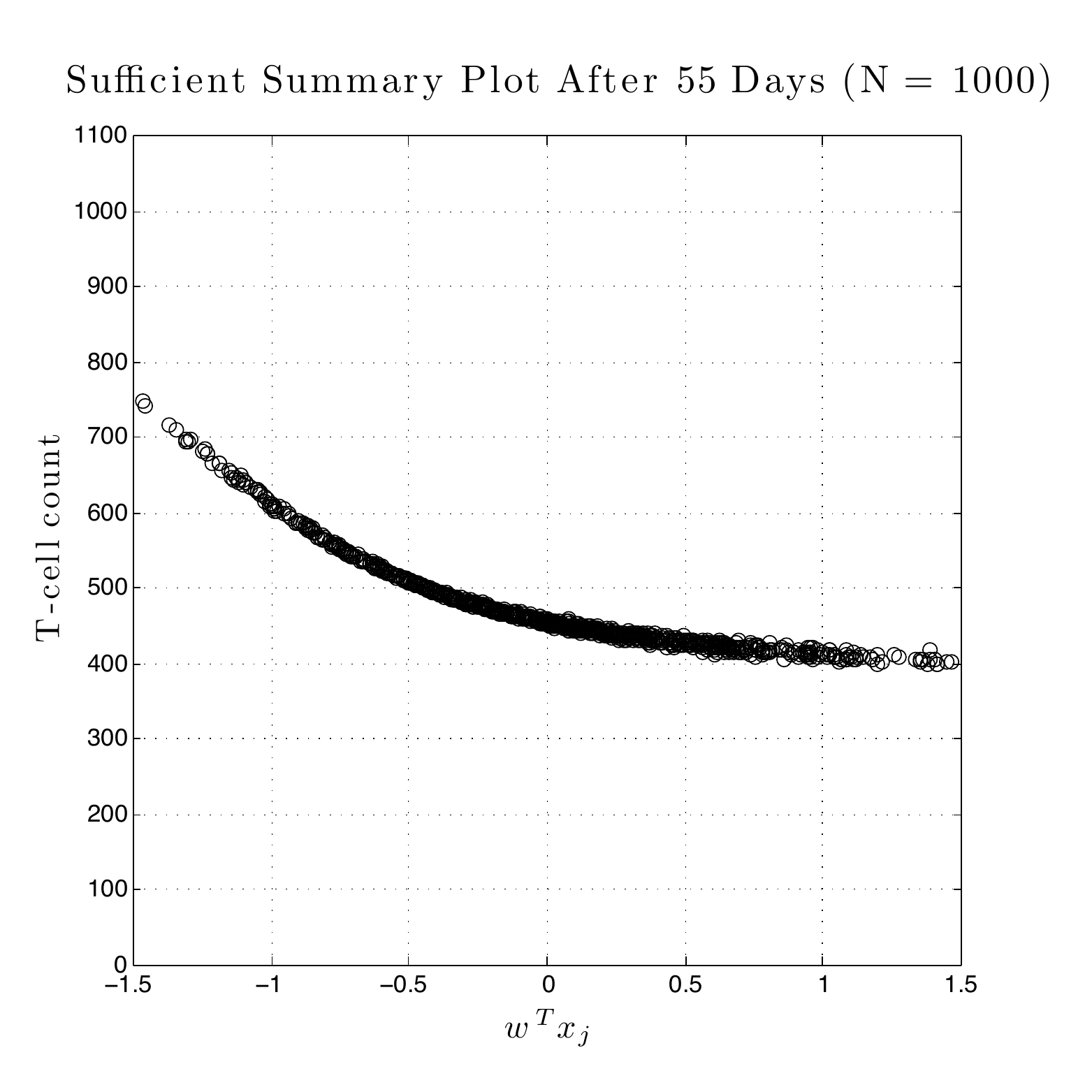}
\end{minipage}
\\
\vspace{-0.2in}
\captionof{figure}{Sufficient summary plots throughout the course of infection - Acute stage.}
\label{fig:SSPs1}
\end{minipage}

\noindent\begin{minipage}{\textwidth}
\vspace{-0.1in}
\hspace{-13mm}
\begin{minipage}{0.3 \textwidth}
\centering
\includegraphics[scale = 0.33]{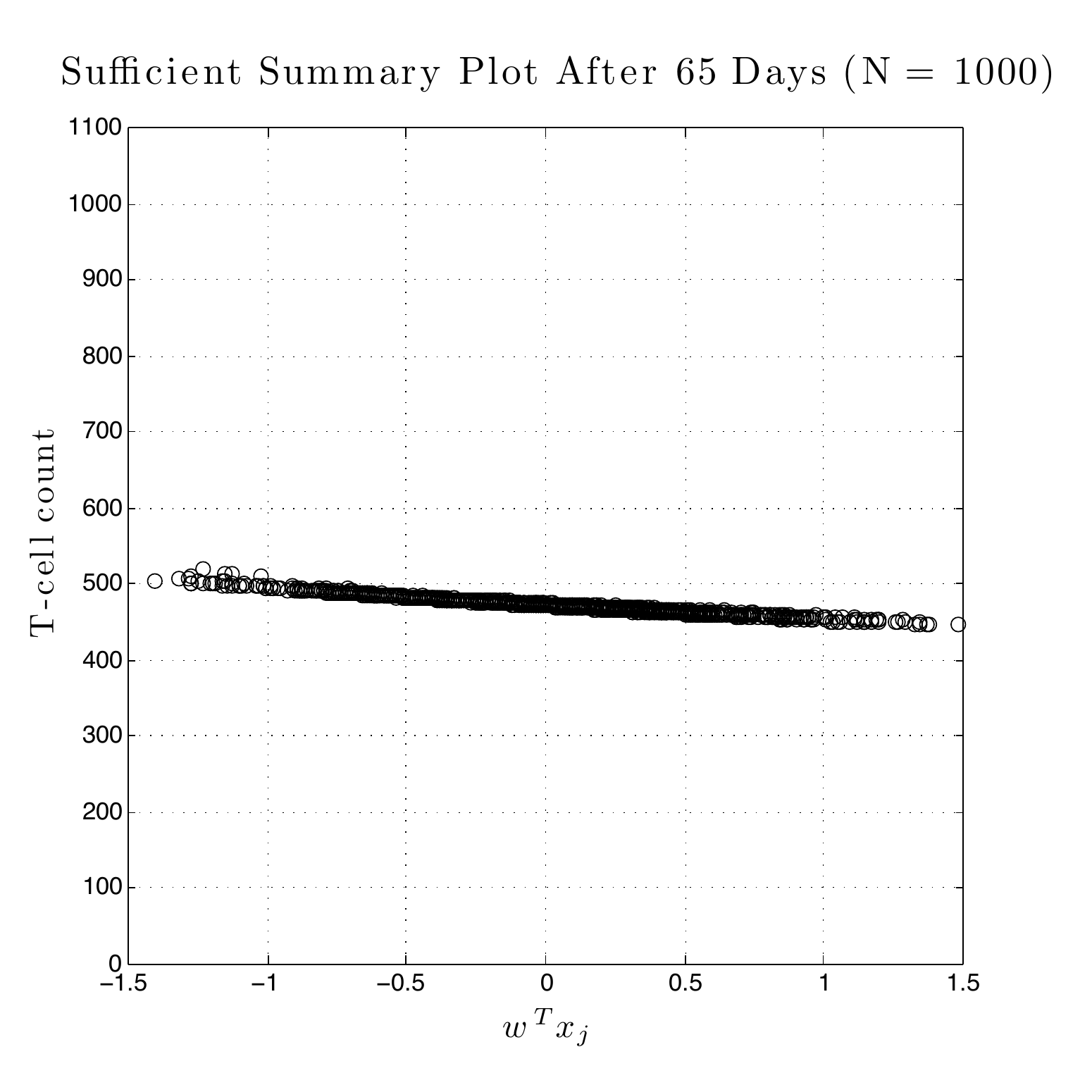}
\end{minipage}
\hspace{8mm}
\begin{minipage}{0.3 \textwidth}
\centering
\includegraphics[scale = 0.33]{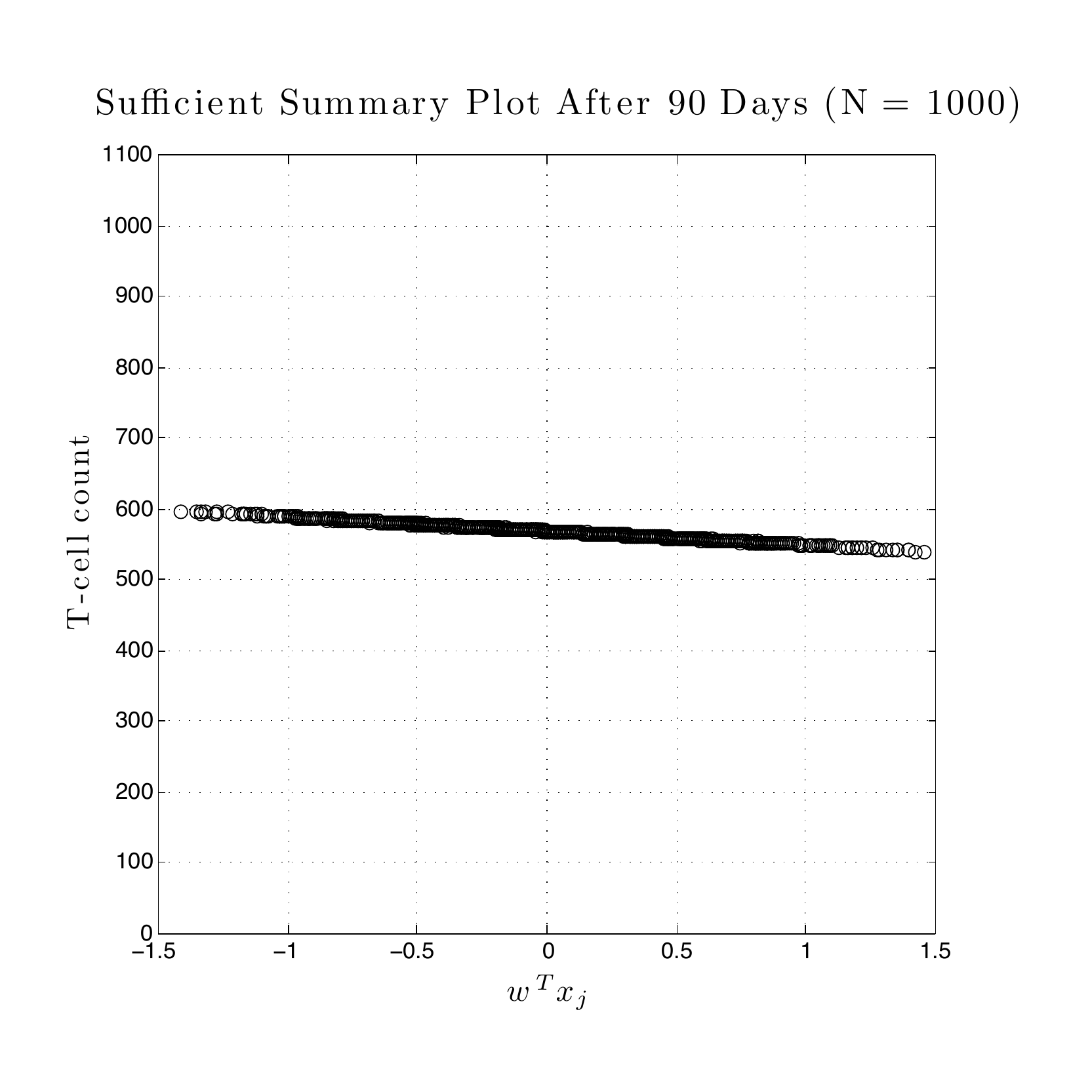}
\end{minipage}
\hspace{11mm}
\begin{minipage}{0.3 \textwidth}
\centering
\includegraphics[scale = 0.33]{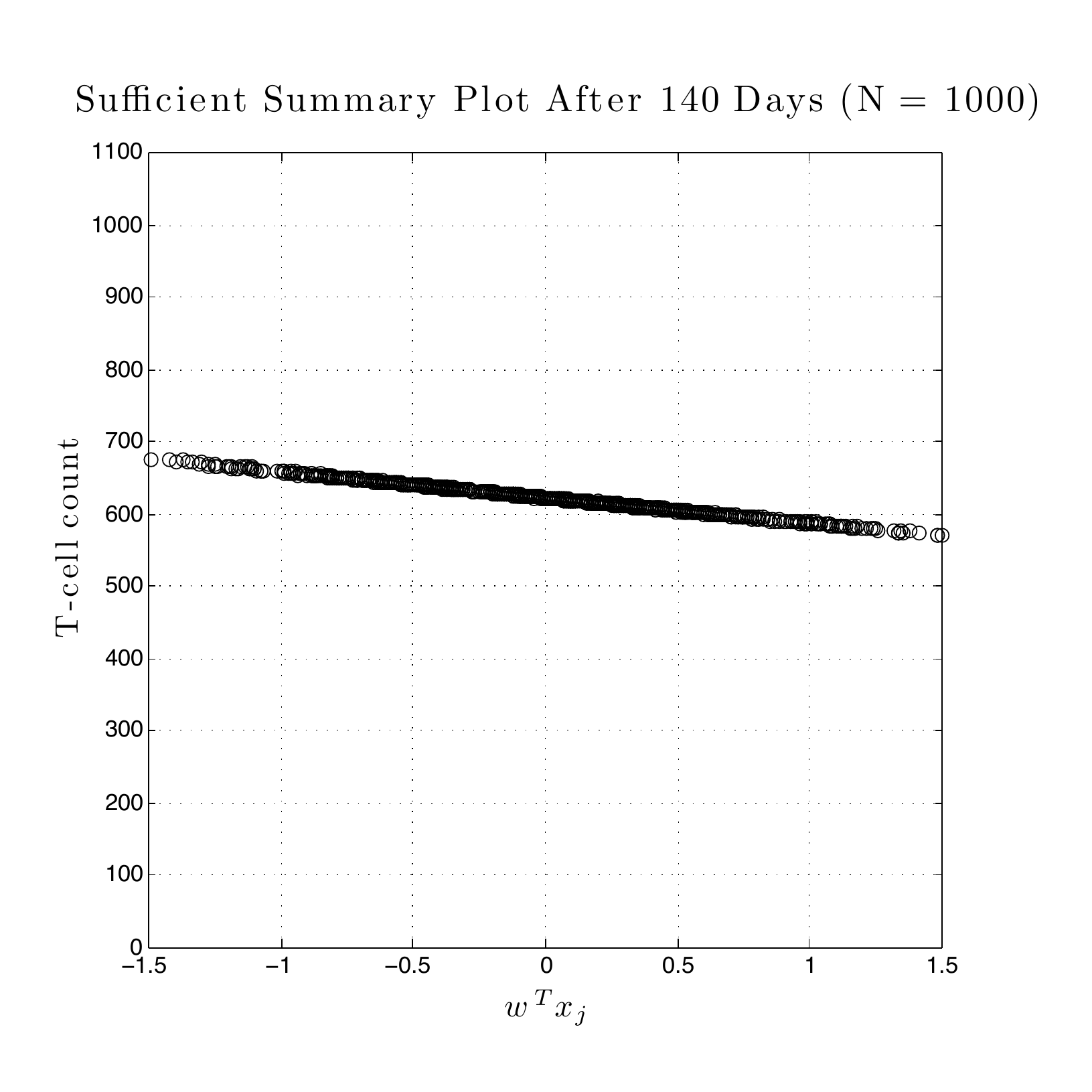}
\end{minipage}

\vspace{-0.25in}
%
\hspace{-15mm}
\begin{minipage}{0.3 \textwidth}
\centering
\includegraphics[scale = 0.33]{SSP33.pdf}
\end{minipage}
\hspace{10mm}
\begin{minipage}{0.3 \textwidth}
\centering
\includegraphics[scale = 0.33]{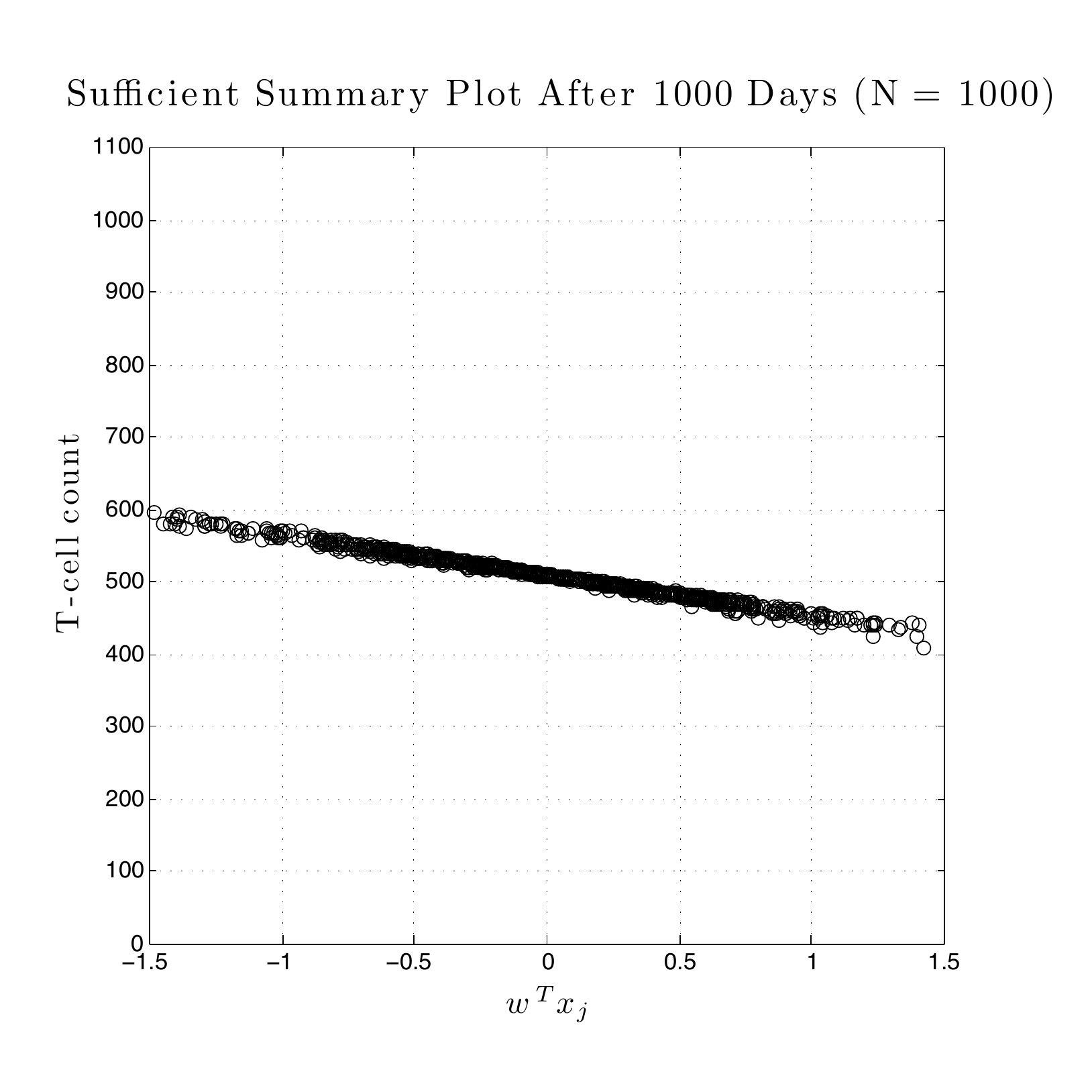}
\end{minipage}
\hspace{10mm}
%
\begin{minipage}{0.3 \textwidth}
\includegraphics[scale = 0.33]{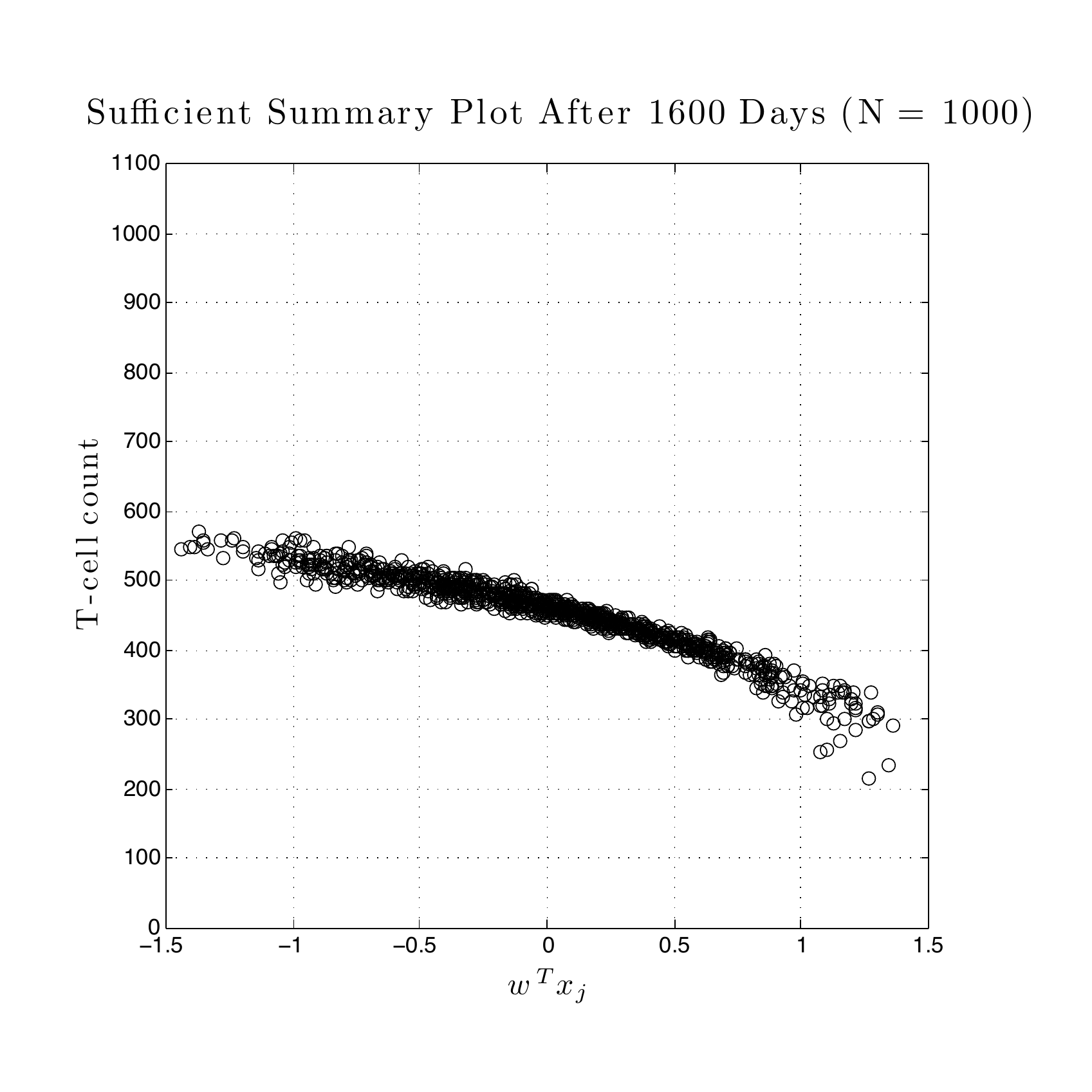}
\end{minipage}
\vspace{-0.2in}
\captionof{figure}{Sufficient summary plots throughout the course of infection - Chronic stage.}
\label{fig:SSPs2}
\end{minipage}

\noindent\begin{minipage}{\textwidth}
\vspace{-0.1in}
\hspace{-15mm}
\begin{minipage}{0.3 \textwidth}
\centering
\includegraphics[scale = 0.33]{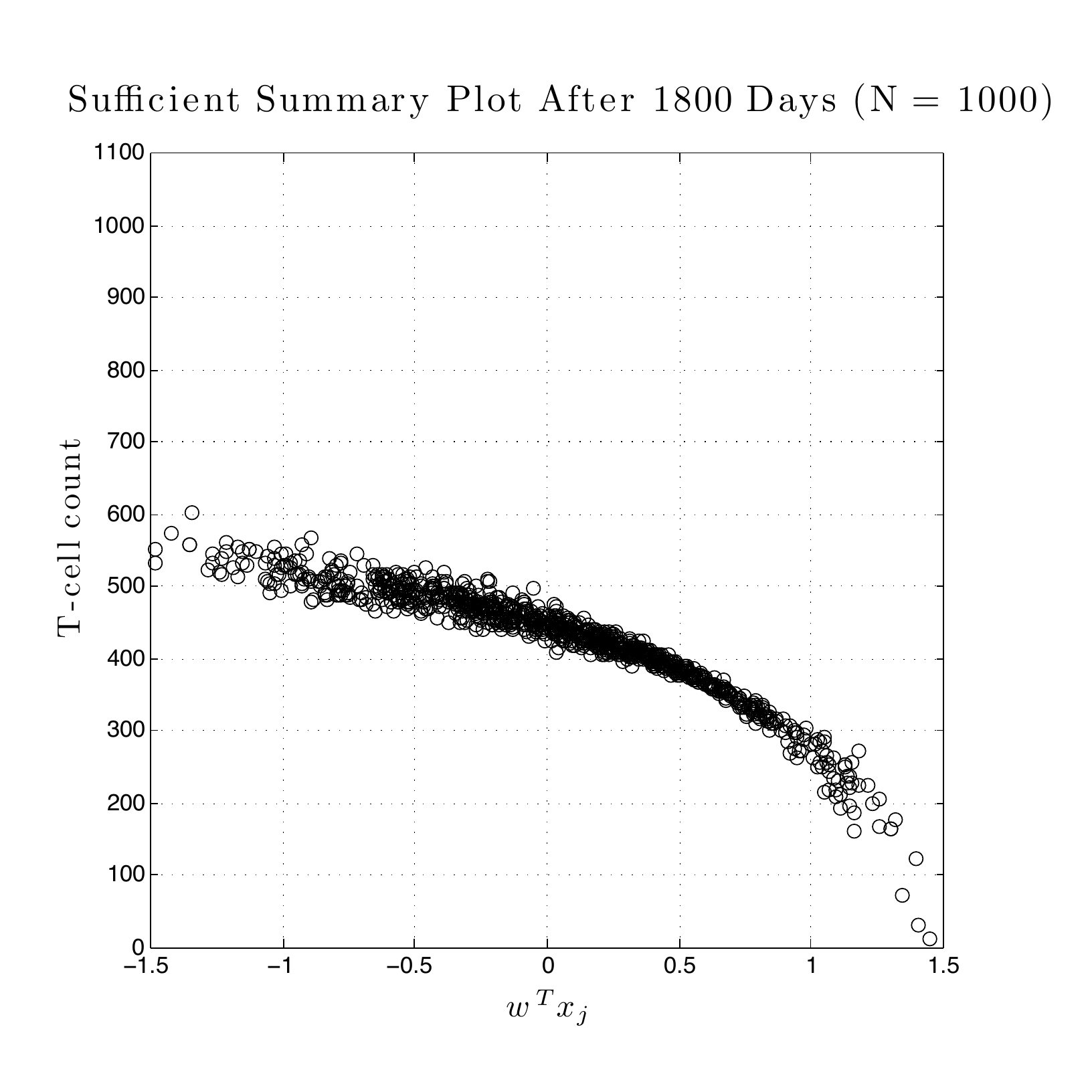}
\end{minipage}
\hspace{8mm}
\begin{minipage}{0.3 \textwidth}
\centering
\includegraphics[scale = 0.33]{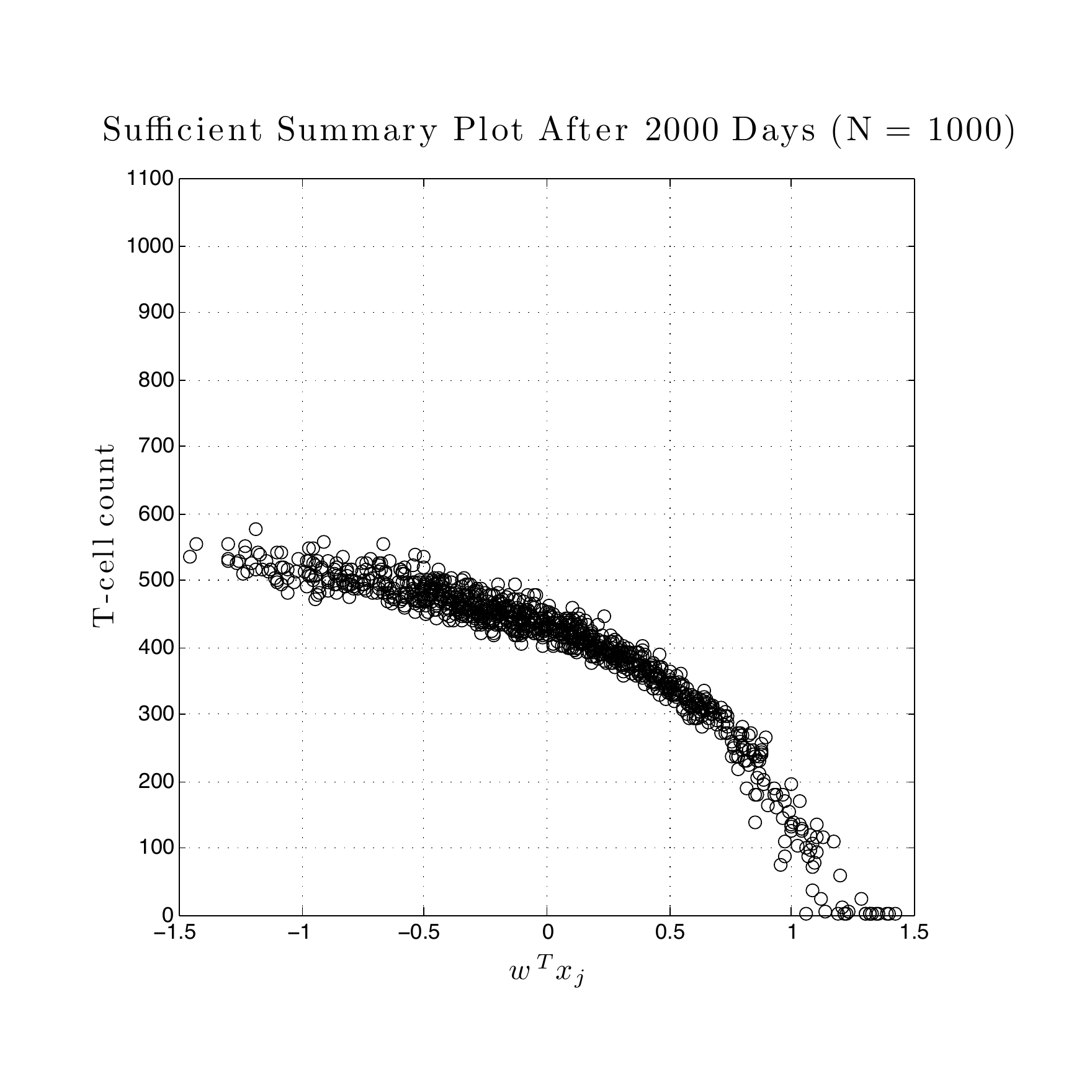}
\end{minipage}
\hspace{12mm}
\begin{minipage}{0.3 \textwidth}
\centering
\includegraphics[scale = 0.33]{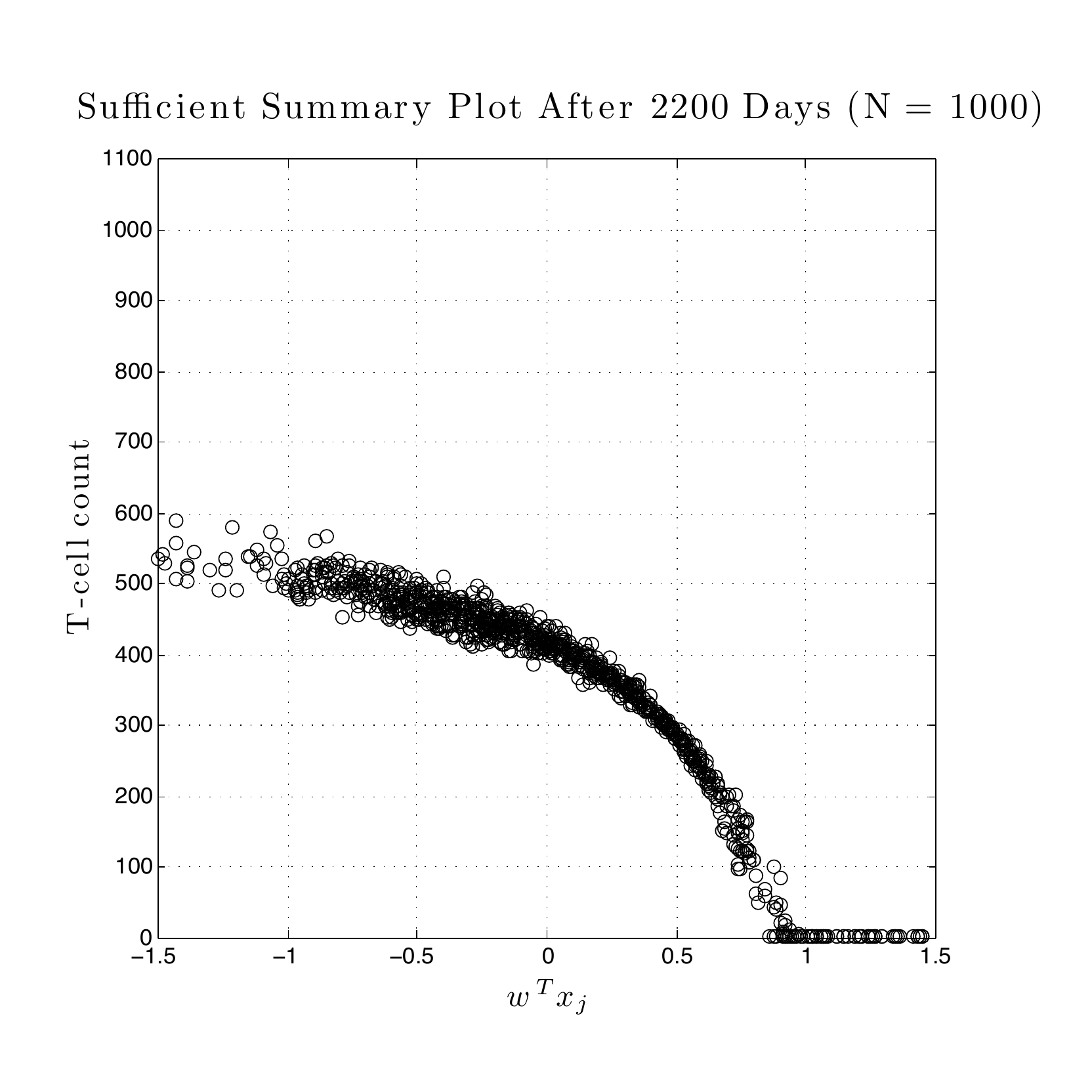}
\end{minipage}

\vspace{-0.3in}
\hspace{-15mm}
\begin{minipage}{0.3 \textwidth}
\centering
\includegraphics[scale = 0.33]{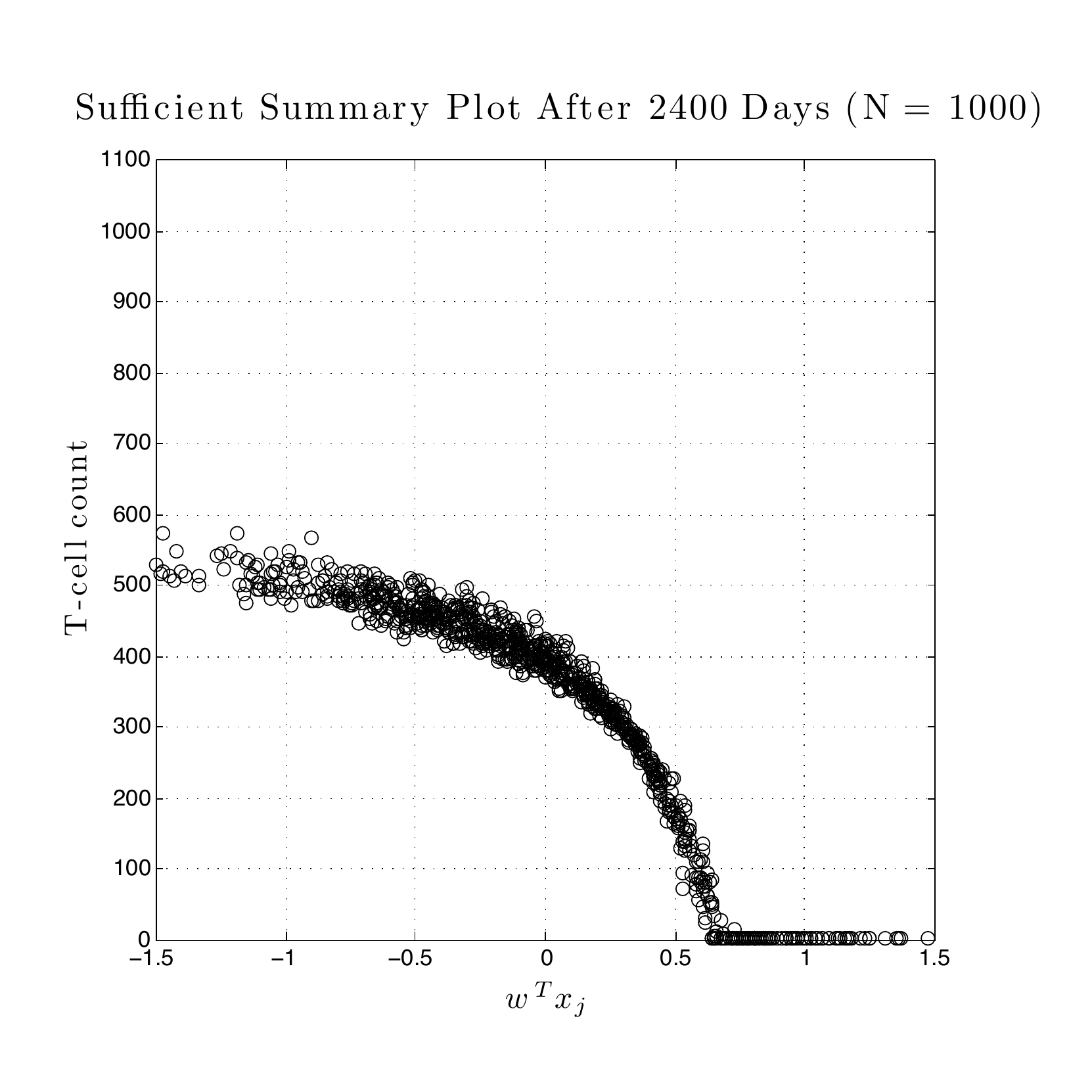}
\end{minipage}
%
\hspace{10mm}
\begin{minipage}{0.3 \textwidth}
\centering
\includegraphics[scale = 0.33]{SSP52.pdf}
\end{minipage}
\hspace{10mm}
\begin{minipage}{0.3 \textwidth}
\centering
\includegraphics[scale = 0.33]{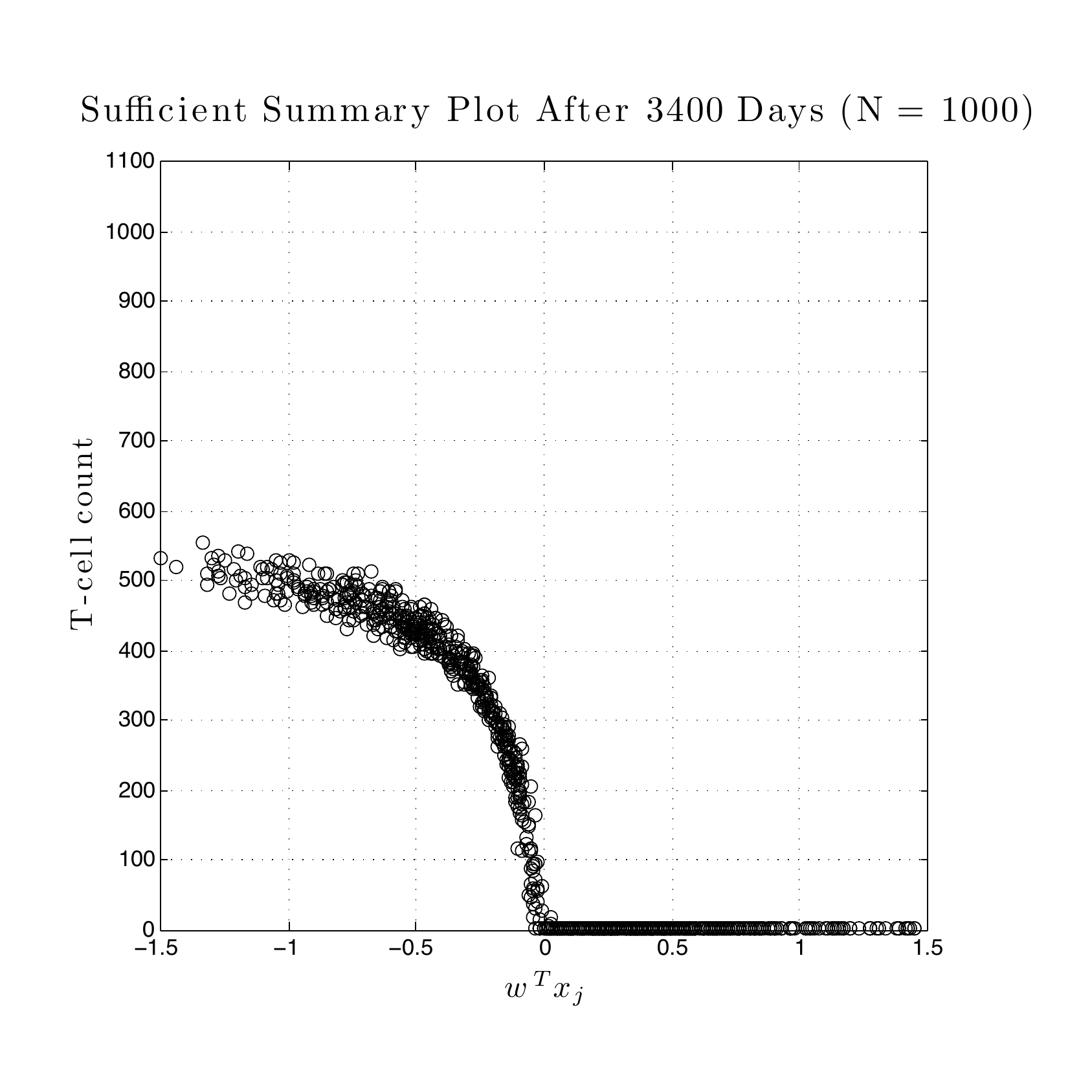}
\end{minipage}
\vspace{-0.2in}
\captionof{figure}{Sufficient summary plots throughout the course of the infection - progression to AIDS}
\label{fig:SSPs3}
\end{minipage}
\vspace{0.1in}
 
\end{document}